\newtheorem*{terminology}{Terminology}
\newcommand{\grD}{{D^{\kappa}}}
\newcommand{\ciD}{D}
\newcommand{\Dforall}{D^\forall}
\DeclareMathOperator{\nextop}{next}
\DeclareMathOperator{\now}{now}
\DeclareMathOperator{\step}{step}
\DeclareMathOperator{\ar}{ar}
\DeclareMathOperator{\tlater}{\blacktriangleright}
\DeclareMathOperator{\PSh}{PSh}
\newcommand{\iso}{\cong}
\newcommand{\defeq}{\mathbin{\overset{\textsf{def}}{=}}}
\newcommand{\later}{\triangleright}
\newcommand\tabs[2]{\lambda (#1\! :\! #2).}
\newcommand\tickc{\diamond}
\newcommand\tapp[2][\tickA]{#2\,[#1] }
\newcommand{\tickA}{\alpha}
\newcommand{\tickB}{\beta}
\newcommand\latbind[2]{{\triangleright}\, (#1 \!: \!#2) .}
\newcommand\latbindsmall[2]{{\triangleright}\, (#1 : #2) .}
\newcommand\toksubst[3][\kappa]{\left[#2/#3\right]}
\newcommand{\capp}[2][\kappa]{#2\,[#1]}
\newcommand{\clocktype}{\mathsf{clock}}
\newcommand{\fix}[1][\kappa]{\mathsf{fix}^{#1}}
\newcommand{\Set}{\mathsf{Set}}
\newcommand{\opcat}[1]{{{#1}^{\mathrm{op}}}}
\newcommand{\slice}[2]{#1/{#2}}
\newcommand{\Prop}{\mathsf{Prop}}
\newcommand{\N}{\mathbb{N}}
\newcommand\hastype[4][]{#2 \vdash_{#1} #3: #4}
\newcommand\wfcxt[2][]{#2 \vdash_{#1}}
\newcommand\istype[3][]{\ensuremath{#2 \vdash_{#1} #3 \, \operatorname{type}}}
\newcommand\genericjudg[3][]{#2 \vdash_{#1} #3}
\newcommand{\subst}[2]{[#1/#2]}
\newcommand{\peq}{=}
\newcommand\sym[1]{\mathsf{#1}}
\newcommand{\tirr}[1][\kappa]{\sym{tirr}^{#1}}
\newcommand{\force}{\sym{force}}
\newcommand{\idty}[3]{#2 \peq_{#1} #3}
\newcommand{\Pfin}{{\mathsf{P}_{\mathsf{f}}}}
\newcommand{\Dfin}{{\mathsf{D}_{\mathsf{f}}}}
\newcommand{\IsEquiv}[1]{\mathsf{IsEquiv}(#1)}
\newcommand{\univ}[1]{\ensuremath{\operatorname{U}_{#1}}}
\newcommand{\elems}[1]{\ensuremath{\operatorname{El}_{#1}}}
\newcommand{\univin}[2]{\mathsf{in}_{#1,#2}}
\newcommand{\forallcode}[1]{\ensuremath{\overline{\forall}}}
\newcommand{\latbindcode}[3]{\overline{\triangleright}\, (#1:#2) . #3}
\newcommand{\latercode}[1][\kappa]{\overline{\triangleright}^{#1}}
\newcommand{\syncode}[1]{\overline{#1}}
\newcommand{\peqcode}[2]{ #1 = #2}
\newcommand{\prop}[1]{\ensuremath{\operatorname{Prop}_{#1}}}
\newcommand{\similar}{\lesssim}
\newcommand{\Func}[2]{\mathsf{Func}(#1, #2)}
\newcommand{\catT}[1][\indexord]{\mathbb{T}_{#1}}
\newcommand{\indexord}{\rho}
\newcommand{\triple}[3]{(#1;#2;#3)}
\newcommand{\quadruple}[4]{(#1;#2;#3;#4)}
\newcommand{\sixtuple}[6]{(#1;#2;#3;#4;#5;#6)}
\newcommand{\seventuple}[7]{(#1;#2;#3;#4;#5;#6;#7)}
\newcommand{\timeobj}[2]{(#1;#2)}
\renewcommand{\i}{\iota}
\newcommand{\clk}{\mathrm{Clk}}
\newcommand{\Elsem}[1]{\mathcal{E}l^{#1}}
\newcommand{\Usem}[1]{\mathcal{U}^{#1}}
\newcommand{\PElsem}[1]{\mathcal{E}l_{\mathsf{Prop}}^{#1}}
\newcommand{\Propsem}[1]{\mathcal{U}_{\mathsf{Prop}}^{#1}}
\newcommand{\code}[1]{\ulcorner #1 \urcorner}
\newcommand{\restrict}[2]{{#1}|_{#2}}
\newcommand{\compr}[2]{#1.#2}
\newcommand{\from}{\leftarrow}
\newcommand{\true}{\mathsf{true}}
\newcommand{\false}{\mathsf{false}}
\newcommand{\kapsubset}[1][\kappa]{\subseteq_{#1}}
\newcommand{\colim}{\mathsf{colim}}
\newcommand{\inv}[1]{#1^{-1}}
\newcommand{\FSA}{\mathcal{E}}
\newcommand{\FSB}{\mathcal{E'}}
\renewcommand{\vartheta}{\delta}
\newcommand{\grtotal}{\mathsf{GR}}
\newcommand{\gr}[1]{\mathsf{GR}[#1]}
\newcommand{\ccat}{\mathbb{C}}
\newcommand{\dcat}{\mathbb{D}}
\newcommand{\op}{\mathsf{op}}
\newcommand{\Tm}{\mathsf{Tm}}
\newcommand{\fv}[1]{\mathsf{fv}(#1)}
\title{Multi-clocked Guarded Recursion Beyond $\omega$} 
\author{Rasmus Ejlers Møgelberg}{IT University of Copenhagen, Denmark}{mogel@itu.dk}{https://orcid.org/0000-0003-0386-4376}{This work was supported by the Independent Research Fund Denmark grant number
2032-00134B.}
\authorrunning{Rasmus Ejlers Møgelberg} 
\keywords{Guarded Recursion, Coinductive Types, Dependent Type Theory, Accessible Functors, Algebraic Theories} 
\begin{document}

\maketitle

\begin{abstract}
Type theories with multi-clocked guarded recursion provide a flexible framework for programming with coinductive types encoding productivity in types. Combining this with solutions to general guarded domain equations one can also construct relatively simple denotational models of programming languages with advanced features. These constructions have previously been explored in the setting of extensional type theory through a presheaf model, which proves correctness of encodings of W-types. That model has been adapted to presheaves of cubical sets (functors into the category of cubical sets), where the model verifies correctness of encodings also of coinductive types whose definitions involve quotient inductive types such as finite powersets or finite distributions. Likewise the cubical model also verifies correctness of coinductive predicates defined using existential quantification
and allows the results to be related to the global world of cubical sets. 

This paper looks at how to extend the extensional presheaf model of multi-clocked guarded recursion to higher ordinals, so that correctness of encodings of coinductive types can be extended from W-types to those involving finite powersets and finite distributions, as well as coinductive predicates involving existential quantification. This extension will allow results previously proved in Clocked Cubical Type Theory to be interpreted in a model based on set-theory, proving the correctness of these results as understood in their usual set theoretic interpretation. 
\end{abstract}

\section{Introduction}
\label{sec:introduction}

The complex numbers form a mathematical paradise in which all polynomials have roots. However, they are not just a dream world without connection to 
the more earthly world of the real numbers. For example, 
computing with complex numbers sometimes 
leads to the construction of real solutions to polynomial equations. In the world of programming language semantics the equations of interest are
recursive domain equations $X \iso F(X)$. Many of these do not have solutions in the category of sets, but the topos of trees~\cite{ToT} serves as a mathematical paradise
in which guarded versions of these equations $X \iso F(\later X)$ all have solutions. One can therefore construct models of advanced programming languages using 
the topos of trees, but can one then relate these results to the real world? 

The answer depends on the type of results, and what is meant by the `real world'. For example, the original paper on the topos of trees~\cite{ToT}
constructs a model of a programming language with higher order store in the internal language of the topos of trees ($\Set^\opcat\omega$), 
and then argues how these results relate to the real world of the category of sets. In some cases, this is not possible, however. 
For example, a proof of an existential quantification of a predicate in the topos of trees does not always externalise
to a proof of an existential in sets. For this reason, Transfinite Iris~\cite{TransfiniteIris} uses higher 
indexing than $\omega$ to reason about liveness properties. Likewise, as we shall explain below, passing to the real world does not preserve functors 
like $\Pfin$ and $\Dfin$ used to model non-determinism and probabilistic choice, and this causes problems when relating models of languages 
with these constructs to the real world. 

%

The problem can be stated type theoretically using \emph{multi-clocked guarded recursion} as in 
Clocked Type Theory (CloTT)~\cite{bahr2017clocks} and its 
presheaf model~\cite{clottmodel}. In CloTT the delay modality $\later^\kappa$ 
is indexed by a clock $\kappa$, and one can quantify these clocks universally. 
The model contains a copy of the standard set theoretic model of type theory as well as a copy of the topos of trees, 
and universal quantification acts as a form of externalisation, relating results proved in the topos of tree to those in 
the real world of set theory. 
%
In this setting, the problem of concluding from existential quantification in the topos of 
trees to actual existence in sets can be expressed as commutativity of quantification over clocks with existential quantification. 

Multi-clocked guarded recursion also offers a way of programming with coinductive
types and predicates in type theory, using guarded recursion to express the requirement of productivity in types~\cite{atkey13icfp,GDTT}. 
Specifically, for functors $F$ commuting with clock quantification, the solution to the guarded recursive domain equation 
$X(\kappa) \iso F(\later^\kappa X(\kappa))$ externalises to a terminal coalgebra with carrier
$\forall\kappa . X(\kappa)$ in the real world of set theory. %
Unfortunately, standard functors such as finite powerset functor $\Pfin$ 
or the finite distribution functor $\Dfin$ do not commute with with clock quantification in the standard model, and so one cannot 
use these tools to program with non-deterministic or probabilistic processes. 

By varying the notion of `real world', however, the situation can be improved. Kristensen et al.~\cite{CubicalCloTT} 
construct a model of the combination of CloTT and 
Cubical Type Theory~\cite{CTT} by considering presheaves valued in cubical sets. In this model, existential quantification 
and functors like $\Pfin$ and $\Dfin$, constructed as 
quotient inductive types on the category of hsets \emph{do} commute with clock quantification. This means that guarded recursion can be used 
as a way of programming with coinductive types and predicates in Cubical Type Theory, including those involving $\Pfin$ and $\Dfin$ 
and existential quantification. Clocked Cubical Type Theory has been used to model programming languages with non-determinism~\cite{mogelberg2021two} and
probabilistic choice~\cite{POPL25}. These models are constructed by solving guarded recursive domain equations
(also with negative occurences), and using the commutativity results above, the models can be related to the
operational semantics of these languages, defined as coinductive non-deterministic and probabilistic processes. Likewise, 
Clocked Cubical Type Theory has been used to prove existence and non-existence of distributive laws between the coinductive delay monad and other monads~\cite{FewExtraPages}, and 
to model programming languages with gradual typing~\cite{GiovanniniDN25}. Some of these results have been formalised in Guarded Cubical 
Agda~\cite{GuardedCubicalAgda}, an extension of the Cubical Agda proof assistant~\cite{CubicalAgda}.

%

The question is now how tied these results are to cubical type theory. For example, the results proved about existence and non-existence of 
distributive laws between monads as interpreted in the model of Kristensen et al. are results about these monads as seen as monads of hsets in the 
cubical model. Do the same results hold for these monads on $\Set$? Likewise, results about contextual equivalence of programs proved using 
the cubical model a priori only say something about their semantics as read in cubical sets. How do these relate to the ordinary set theoretic
interpretations? 

In this paper we return to the set theoretic model of CloTT and look at how varying the indexing ordinal used in the model construction implies
commutativity of clock quantification with certain existential quantifications and with functors constructed as quotient inductive types
such as $\Pfin$ and $\Dfin$. The contributions are as follows
\begin{itemize}
\item We analyse existing applications of Clocked Cubical Type Theory~\cite{CubicalCloTT,POPL25,FewExtraPages} and 
isolate requirements on the model that will allow for these to be interpreted in the set theoretic model.
\item We state the conservativity of the model of Clocked Type Theory over the standard set theoretic model of type theory. This is an
easy consequence of the model construction of~\cite{clottmodel}, but has not previously been stated explicitly. We show how this allows us
to relate results proved in CloTT to the real world of set theory.
\item We give semantic conditions for monads on sets to commute with clock quantification in the model of Clocked Type Theory
indexed by regular ordinals, and we give a simple condition on algebraic theories to generate such monads. This accounts for the functors
$\Pfin$ and $\Dfin$ mentioned above, but also monads of higher arity, such as the countable powerset monad.
\item We give conditions allowing existential quantification to commute with quantification over clocks in the model
\end{itemize}
As a result we see that the applications mentioned above can all be expressed in the internal language of a variant of the set theoretic model
indexed by a sufficiently large ordinal. We hope that these results 
can form part of the semantic
basis for formalisation of applications as those mentioned above also in proof assistants
not based on cubical type theory. 

We remark that the metatheory in which the models are constructed is classical set theory. 
In particular, the proof of Theorem~\ref{thm:semantic:condition} is very classical. 

\subsection*{Overview. } We start by recalling Clocked Type Theory and the encoding of coinductive
types from guarded recursive types in Sections~\ref{sec:cctt} and~\ref{sec:encoding:coind}. 
Section~\ref{sec:applications} recalls the applications of Clocked Cubical Type Theory and derives the requirements
needed to express these in the version of Clocked Type Theory based on the set theoretic model. Section~\ref{sec:model}
recalls the presheaf model of Clocked Type Theory and describes the conservativity result over the set theoretic model. 
Section~\ref{sec:alg:types} gives conditions on finitary monads to commute with clock quantification as well
as the syntactic condition on algebraic theories ensuring this. Section~\ref{sec:existential} considers commutativity of existential
quantification and clock quantification, and Section~\ref{sec:conclusion} concludes.



\section{Clocked Type Theory}

\label{sec:cctt}

\begin{figure}
\begin{mathpar}
  \inferrule*
  {\,}
  {\wfcxt{\Gamma, \kappa : \clocktype}{}}
  \and
  \inferrule*
  {\kappa : \clocktype \in \Gamma}
  {\wfcxt{\Gamma, \tickA : \kappa}{}}
  \and
  \inferrule*
  {\hastype{\Gamma}{t}{\latbind{\tickA}{\kappa} A}\\ \wfcxt{\Gamma,\tickB:\kappa,\Gamma'}}
  {\hastype{\Gamma,\tickB: \kappa,\Gamma'}{\tapp[\tickB] t}{A\toksubst{\tickB}{\tickA}}}
  \and
  \inferrule*
  {\hastype{\Gamma,\tickA:\kappa}{t}{A}}
  {\hastype{\Gamma}{\tabs{\tickA}{\kappa} t}{\latbind{\tickA}{\kappa} A}}
  \and
    \inferrule*
  {\hastype{\Gamma,\kappa : \clocktype}{t}{A}}
  {\hastype{\Gamma}{\Lambda\kappa. t}{\forall \kappa . A}}
  \and
  \inferrule*
  {\hastype{\Gamma}{t}{\forall \kappa . A}\\
    \kappa' : \clocktype\in \Gamma}
  {\hastype{\Gamma}{t [\kappa']}{A \subst{\kappa'}{\kappa}}}
 \and
  \inferrule*
  {\kappa : \clocktype \in \Gamma \\ \istype\Gamma A}
  {\hastype\Gamma{\fix}{(\later^\kappa A \to A) \to A}}
 \and
  \inferrule*
  {x : A \in \Gamma}
  {\hastype\Gamma xA}
\end{mathpar}
\caption{Selected typing rules for Clocked Type Theory~\cite{bahr2017clocks,clottmodel}.}
\label{fig:later:typing}
\end{figure}

This section gives a brief introduction to Clocked Type Theory. We refer readers to \cite{clottmodel} for full details. 

Clocked Type Theory is a dependently typed type theory with $\Pi$- and $\Sigma$-types, extensional
identity types, binary and nullary sums and products, as well as a modality $\later$ for delays and a guarded fixed point operator. 
Delays are associated with clocks, so that $\later^\kappa A$ can be understood as a type of data of type $A$ delayed by one time step
on clock $\kappa$. To program with clocks, the type theory has a special form of assumptions $\kappa : \clocktype$ that can be placed in a context.
Note that $\clocktype$ is not a type, but rather a pretype, similar to the status of the interval in Cubical Type Theory~\cite{CTT}. Introduction
and elimination for $\later$ is by abstracting and applying ticks on clocks, following the 
Fitch-style rules for programming with modal types~\cite{clouston2018fitch,drat}. Selected typing rules can be found 
in Figure~\ref{fig:later:typing}. The notation
$\later^\kappa A$ is used for $\latbind\tickA\kappa A$ when the tick $\tickA$ does not appear free in $A$. Note that variables can be introduced 
from anywhere in a context, also if they are behind ticks. This means that one can define a term $\nextop^\kappa \defeq \lambda x. \tabs\tickA\kappa x$
of type $A \to \later^\kappa A$. Given $f : \later^\kappa A \to A$, the fixed point operator computes the unique fixed point of $f \circ \nextop^\kappa$. 
The language of ticks also allows one to define a dependent version of an applicative functor action for $\later$:
\[
 \lambda f .\lambda x. \tabs\tickA\kappa{\tapp f (\tapp x)}: \later^\kappa(\Pi(x:A). B(x)) \to \Pi(x : \later^\kappa A) . \latbind\tickA\kappa{B(\tapp x)}
\]
This is, for example, useful for reasoning about elements of modal type. It can also be
used to express an extensionality principle for $\later$:
\[
(t =_{\latbindsmall\tickA\kappa A} u) \iso \latbind\tickA\kappa (\tapp t =_A \tapp u)
\]
For these applications, it is important 
that the same ticks can be used on both the logical and term side, and this is one benefit of using 
type theory rather than a language with a clear separation between term language and logic, such as topos logic. 

\subsection{Axioms}

One of the main purposes of quantification over clocks it to allow for a controlled elimination of $\later$. In Clocked Type Theory this is 
done by an operation of application to a tick constant $\tickc$, with very special typing rules, 
which can be used to construct a term of type 
\[
  \force : (\forall\kappa . \later^\kappa A) \to \forall\kappa . A.
\]
This, together with the 
\emph{tick irrelevance axiom} 
\begin{equation}\label{eq:tirr}
\inferrule*
{\hastype{\Gamma}{t}{\later^\kappa A}}
{\tirr t :  \latbind{\tickA}{\kappa}  \latbind{\tickA'}{\kappa} \idty{A}{\tapp{t}}{\tapp[\tickA'] t}}
\end{equation}
allows one to prove the following 
\begin{equation} \label{eq:force}
 \IsEquiv{\lambda (x : \forall\kappa . A) . \lambda \kappa . \tabs\tickA\kappa{\capp x}}
\end{equation}
where $\IsEquiv{-}$ is defined precisely as in homotopy type theory~\cite{hottbook}. In the present case of extensional
type theory, this axiom just states that the canonical map 
$(\forall\kappa .A) \to \forall\kappa.\later^\kappa A$ 
is an isomorphism. 

The axiom of \emph{clock irrelevance} states
\begin{equation} \label{eq:cirr}
\IsEquiv{\lambda x : A . \lambda\kappa . x}
\end{equation}
Again, in our setting, this just states that the canonical map 
$A \to \forall\kappa.A$ is an isomorphism when $\kappa$ is not 
free in $A$. 
In Clocked Cubical Type Theory~\cite{CubicalCloTT} 
clock irrelevance is not an axiom, but can be proved for many types. The other axioms mentioned can all 
be given computational content in Clocked Cubical Type Theory, but it is unclear how to generalise this to 
type theories not based on Cubical Type Theory. In this paper we are mostly concerned with properties 
of the model, and so we will assume clock irrelevance and tick irrelevance as axioms, deriving (\ref{eq:force}) as a consequence. 

\subsection{Universes}

\begin{figure}
\begin{mathpar}
 \inferrule*{\Delta = \{\kappa_1, \dots, \kappa_n\} \and \hastype{\Gamma}{\kappa_i}\clocktype \text{ for }i=1,\dots, n
 }
 {\istype{\Gamma}{\univ{\Delta}}} \and
 \inferrule*{
 \hastype{\Gamma}{t}{\univ{\Delta}}
 }
 {\istype{\Gamma}{\elems{\Delta}(t)}} 
 \and
 \inferrule*{
 \hastype{\Gamma}{t}{\univ{\Delta}} \\  \Delta \subseteq \Delta'
 }{
 \hastype{\Gamma}{\univin{\Delta}{\Delta'}(t)}{\univ{\Delta'}}}
 \and
   \inferrule*{
 \hastype{\Gamma,\kappa : \clocktype}A{\univ{\Delta,\kappa}}
 }
 {
  \hastype{\Gamma}{\forallcode{\Delta}\kappa . A}{\univ{\Delta}}
 } \and
 \inferrule*{
 \hastype{\Gamma, \tickA : \kappa}{A}{\univ{\Delta}} \and \kappa \in \Delta
 }{
 \hastype{\Gamma}{\latbindcode\tickA\kappa A}{\univ{\Delta}}
 }
\end{mathpar}
\caption{Selected typing rules for universes.}
\label{fig:universes}
\end{figure}

It is inconsistent with the clock irrelevance axiom to assume a single universe closed under $\later$~\cite{clottmodel}. 
To solve this, CloTT has a family of universes indexed by finite sets of clocks. 
Figure~\ref{fig:universes} shows some basic rules for these. Judgemental equality rules state that codes for type operations
indeed encode those type operation, as
well as commutativity rules for universe inclusions $\univin{\Delta}{\Delta'}$ and codes for type operations. For example
\begin{align*}
 \elems{\Delta'}(\univin{\Delta}{\Delta'}(t)) & = \elems{\Delta}(t) & 
 \elems{\Delta}(\latbindcode\tickA\kappa A) & = \latbind\tickA\kappa (\elems\Delta(A))
\end{align*}
Note that
a universe $\univ\Delta$ is only closed under $\later^\kappa$ for the $\kappa$ in $\Delta$. The notation $\Delta, \kappa$ is
shorthand for the union of $\Delta$ and $\{\kappa\}$. 

Unlike previous presentations of CloTT, we include a family of universes of propositions $\prop\Delta$, also indexed by clock
contexts $\Delta$. All rules of Figure~\ref{fig:universes} have corresponding rules for universes of propositions. Moreover,
these universes are closed under basic operations of predicate logic such as
\begin{mathpar}
\inferrule*{\hastype\Gamma t{\univ\Delta} \\ \hastype\Gamma\phi{\elems\Delta(t) \to \prop\Delta} 
}{\hastype\Gamma{\exists (x:\elems\Delta(t)) . \phi(x)}{\prop\Delta}
} \and
\inferrule*{ \hastype\Gamma t{\univ\Delta} \\ \hastype\Gamma {u,s}{\elems\Delta(t)}
}{ \hastype\Gamma{\peqcode us}{\prop\Delta}
}
\end{mathpar}
and a similar rule for universal quantification. 

%

\section{Encoding Coinductive Types}
\label{sec:encoding:coind}

One benefit of universes is that they allow for encoding guarded recursive types simply as fixed points for
endomaps on universes. Suppose, for example, that $A: \univ{\Delta,\kappa}$. Defining $\syncode{\grD} (A)$
as the fixed point of the map 
\[
\lambda (x : \later^\kappa\univ{\Delta, \kappa}) . A \syncode{+} \latercode x
\]
then $\grD (A) \defeq \elems{\Delta,\kappa}(\syncode{\grD} (A))$ satisfies $\grD(A) = A + \later^\kappa \grD(A)$. Note that 
this is an actual judgemental equality of types. In Clocked Cubical Type Theory, fixed points only unfold up to propositional
equality, and the above type equation is only an equivalence of types. The type constructor $\grD$ is often referred 
to as the \emph{guarded delay monad}. 

Guarded recursive types can be used as a stepping stone towards defining coinductive types. For example, 
the more standard coinductive delay monad $\ciD$ can be encoded as the map that maps $A : \univ\Delta$ to 
$\forallcode{}\kappa .\syncode{\grD}(A)$ where $\kappa$ is a fresh clock. This gives the coinductive solution
to $\ciD(A) \iso A + \ciD(A)$. Using guarded recursion has the benefit that one can program with $\ciD$ using the guarded
fixed point operator, which allows productivity of coinductive definitions to be captured in types. 

To describe the general setting of the encoding of coinductive types, note first that for any type $I$, the type $I \to \univ{\Delta}$ describes
the objects of a category in the naive sense: Morphisms from $X$ to $Y$ are simply terms of the type $\Pi(i:I). X(i) \to Y(i)$. Here we
have left the `elements of' operation $\elems\Delta$ implicit as we shall do below also for operations of universe inclusions
to avoid clutter. If $\Delta \subset \Delta'$, the universe inclusion $\univin\Delta{\Delta'}$ induces a functor from $I \to \univ{\Delta}$
to $I \to \univ{\Delta'}$. In the opposite direction, universal quantification over clocks defines a functor 
$(\forall\kappa. (I \to \univ{\Delta, \kappa})) \to (I \to \univ{\Delta})$ (assuming $\kappa$ fresh for $I$) by
\[
 (\forall\kappa . X)(i) \defeq \forallcode{}\kappa. \capp{X}(i)
\]
where $\forall\kappa. (I \to \univ{\Delta, \kappa})$ is considered a category in the naive sense. In the next definition
we write $\Func \ccat \dcat$ for the type of functors between categories $\ccat$ and $\dcat$. 
\begin{definition}
 A functor $F : (I \to \univ{\Delta}) \to (J\to \univ{\Delta})$ \emph{commutes with clock quantification} 
 if there exists a functor $F' : \forall\kappa . \Func{I \to \univ{\Delta, \kappa}}{J\to \univ{\Delta, \kappa}}$ extending $F$ 
 in the sense that
 \[
\begin{tikzcd}
 (I \to \univ{\Delta}) \ar[r, "F"] \ar{d} & (J\to \univ{\Delta}) \ar[d] \\
 (I\to \univ{\Delta,\kappa}) \ar[r, "F'"] & (J\to \univ{\Delta, \kappa})
\end{tikzcd}
\]
such that the canonical morphism 
 \[
  F(\forall\kappa. X) \to \forall\kappa . \capp{F'}(X)
 \]
 is an isomorphism in $J \to \univ{\Delta}$ for all $X$ in 
 $\forall\kappa . (I \to \univ{\Delta, \kappa}))$. 
\end{definition}

Suppose, for example, that $A : \univ\Delta$. Then $F(X) \defeq A + X$ defines a functor $\univ\Delta \to \univ\Delta$ (omitting $I=1$),
which extends to a functor $F' : \forall\kappa . (\univ{\Delta,\kappa} \to \univ{\Delta,\kappa})$ because the universe inclusions 
commute with binary sums. To prove that $F$ commutes with clock quantification, one needs universal 
quantification to distribute over sums 
($\forall\kappa . (B+C) \iso \forall\kappa. B + \forall\kappa .C$) and that $A$ is clock irrelevant which is an axiom (\ref{eq:cirr}).
In the following we shall use the same symbol for the functor $F$ and its extension $F'$. 

\begin{theorem}[\cite{atkey13icfp,Mogelberg14}] \label{thm:coinductive:encoding}
 Let $F : (I \to \univ\Delta) \to (I \to \univ\Delta)$ be a functor commuting with universal quantification over clocks. 
 There is a unique object $\mu^\kappa X. F(\later^\kappa X)$ in $(I \to \univ{\Delta, \kappa})$ satisfying 
 $F(\later^\kappa(\mu^\kappa X. F(\later^\kappa X))) \iso \mu^\kappa X. F(\later^\kappa X)$. The 
 object $\forall\kappa . \mu^\kappa X. F(\later^\kappa X)$ is a final coalgebra for $F$. 
\end{theorem}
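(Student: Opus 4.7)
The plan is to build $\mu \defeq \mu^\kappa X. F(\later^\kappa X)$ as a guarded fixed point in the universe, transfer the resulting unfolding equation across $\forall\kappa$ using the commutativity assumption and the force axiom, and then establish finality by a second application of guarded recursion, with uniqueness of guarded fixed points handling the uniqueness clause.

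For the first step, using the extension of $F$ to $\forall\kappa . ((I\to\univ{\Delta,\kappa}) \to (I\to\univ{\Delta,\kappa}))$ and the pointwise lift of $\later^\kappa$ to $I\to\univ{\Delta,\kappa}$, I would form the endomap $G \defeq F \circ \later^\kappa$ and feed its code into $\fix$. This produces $\mu : I \to \univ{\Delta,\kappa}$ together with the judgemental unfolding $\mu = F(\later^\kappa \mu)$; uniqueness of such a fixed point follows from the standard property of $\fix$ in the presence of extensional identity types. Quantifying over $\kappa$ and chaining
\[
 \forall\kappa . \mu \iso \forall\kappa . F(\later^\kappa \mu) \iso F(\forall\kappa . \later^\kappa \mu) \iso F(\forall\kappa . \mu)
\]
via the unfolding, the commutativity hypothesis on $F$, and the force axiom~\eqref{eq:force} respectively, yields an isomorphism, and hence in particular an $F$-coalgebra structure, on $\forall\kappa . \mu$.

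For finality, consider any $F$-coalgebra $(Y, f : Y \to F(Y))$ with $Y : I \to \univ\Delta$. For each clock $\kappa$ I would construct $h^\kappa : Y \to \mu$ by guarded recursion: given a guarded recursive hypothesis $\tilde h : \later^\kappa(Y \to \mu)$, define
\[
h^\kappa(y) \defeq F\bigl(\lambda y'. \tabs\tickA\kappa \tapp{\tilde h}(y')\bigr)(f(y))
\]
which typechecks in $F(\later^\kappa \mu) = \mu$. Abstracting over $\kappa$ and using clock irrelevance~\eqref{eq:cirr} on $Y$ to identify $\forall\kappa . (Y \to \mu)$ with $Y \to \forall\kappa . \mu$ produces the desired morphism, and unwinding the three isomorphisms above confirms that it respects the coalgebra structure.

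The main obstacle is uniqueness. Given a second coalgebra morphism $g : Y \to \forall\kappa . \mu$, I would use clock irrelevance on $Y$ to rewrite $g$ as a family $g^\kappa : Y \to \mu$ for each $\kappa$; the coalgebra-morphism condition, unfolded through the chain of isomorphisms above, shows that each $g^\kappa$ is itself a guarded fixed point of the endomap $H : \later^\kappa (Y \to \mu) \to (Y \to \mu)$ used to define $h^\kappa$. Uniqueness of guarded fixed points in extensional Clocked Type Theory, supported by the tick irrelevance axiom~\eqref{eq:tirr}, then forces $g^\kappa = h^\kappa$. The genuinely delicate point is the bookkeeping between $\forall\kappa$, $\later^\kappa$, the force equivalence and the commutativity hypothesis on $F$: these must line up precisely for the coalgebra condition on $g$ to translate into the required fixed-point equation.
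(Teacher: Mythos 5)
Your proposal is correct and follows essentially the same route as the paper, which itself only sketches the argument: your chain $\forall\kappa.\mu \iso \forall\kappa.F(\later^\kappa\mu) \iso F(\forall\kappa.\later^\kappa\mu) \iso F(\forall\kappa.\mu)$ is exactly the structure-isomorphism construction the paper displays, and your existence/uniqueness argument via guarded recursion and uniqueness of guarded fixed points is the standard proof from the cited works of Atkey--McBride and M{\o}gelberg to which the paper defers.
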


The encoding was first proved with respect to a denotational model by Atkey and McBride~\cite{atkey13icfp}, and later internalised in type theory 
by the author~\cite{Mogelberg14}. Although the setting above is slightly different from the original one, due to the indexing of universes, the proof carries over
verbatim from~\cite{Mogelberg14}. To get a feeling for the proof, note how the structure isomorphism of the final coalgebra can be easily 
constructed:
\begin{align*}
\forall\kappa . \mu^\kappa X. F(\later^\kappa X) & \iso \forall\kappa . F(\later^\kappa(\mu^\kappa X. F(\later^\kappa X))) \\
 & \iso F(\forall\kappa . \later^\kappa(\mu^\kappa X. F(\later^\kappa X))) \\
 & \iso F(\forall\kappa . \mu^\kappa X. F(\later^\kappa X)) 
\end{align*}
A version of Theorem~\ref{thm:coinductive:encoding} can be proved also for the universes of propositions $\prop\Delta$. This allows for 
encoding of coinductive predicates. 

For Theorem~\ref{thm:coinductive:encoding} to be useful, one needs a large collection of functors commuting with clock quantification. Many
functors have this property just by virtue of being defined by type constructors that commute with clock quantification. For example, 
$\forall\kappa$ distributes over $\Sigma$-types just like $\Pi$ does, and commutes with $\Pi$, so polynomial functors commute with clock quantification: 
\begin{align*}
 \forall\kappa . (\Sigma (a: A) . \Pi(b : B(a) . X)) & \iso  \Sigma (a: \forall\kappa  . A) . \forall\kappa . \Pi(b : B(\capp a) . X) \\
 & \iso  \Sigma (a: A) . \forall\kappa . \Pi(b : B(a) . X) \\
  & \iso \Sigma (a: A) . \Pi(b : B(a) . \forall\kappa . X)
\end{align*}
using clock irrelevance of $A$ in the second line. In the model, universal quantification over clocks distributes over binary sums of types, 
but this is not directly provable in the type theory unless one encodes sum as a $\Sigma$-type indexed by booleans and uses clock-irrelevance
for booleans. In this paper we will just consider this an axiom, and so it proves correctness of the encoding of the coinductive
delay monad. The model also verifies that $\forall\kappa$ commutes with $\later^{\kappa'}$, and this can be proved internally 
if the rule for tick application is extended as in~\cite{CubicalCloTT}. In this paper we will look at conditions for $\forall\kappa$ commuting with 
quotient inductive types, in particular monads induced by algebraic theories. 

On the propositional side, one can prove that $\forall\kappa$ commutes with conjunction, as well
as universal quantification over clocks as well as ordinary variables. 
Similarly to the case of types, the model also verifies that $\forall\kappa$ commutes with disjunction and $\later^{\kappa'}$. 
Existential quantification is more difficult. In Clocked Cubical Type Theory, existential quantification can be encoded using $\Sigma$ 
and propositional truncation, and both of these can be proved to commute with clock quantification. In the extensional model
clock quantification does not commute with propositional truncation (see Example~\ref{ex:prop:trunc} below). 
We will see, however, that by varying the indexing ordinal of 
the model, it is safe to assume commutativity with quantification over sets of small cardinality. E.g., by indexing over $\omega_1$ 
(the first uncountable ordinal), clock quantification commutes with existential quantification over countable set. 
We first look at what is needed for applications. 

%

\section{Applications}
\label{sec:applications}

This section recalls the applications of Clocked Cubical Type Theory to coalgebra~\cite{CubicalCloTT},  
semantics of programming languages with probabilistic choice~\cite{POPL25} and to 
reasoning about weak bisimilarity~\cite{FewExtraPages}. The aim is to determine the precise requirements 
needed to interpret these in a model of Clocked Type Theory. 
%

\subsection{Coalgebra}

Coalgebras in general describe systems, and final coalgebras behaviours of such systems. Often 
the systems of interest are non-deterministic or probabilistic, and can be described as coalgebras for functors involving powersets or distributions. 
For example, coalgebras for the functor $F(X) = \Pfin(A\times X)$, where $\Pfin$ is the finite powerset functor, are finitely branching 
labelled transition systems with labels in $A$. Similarly, coalgebras for $F(X) = A \times \Dfin(X)$, where $\Dfin$ is the finite distribution functor, 
are Markov-processes with labels in $A$. Since clock quantification commutes with products, Theorem~\ref{thm:coinductive:encoding}
will give encodings of final coalgebras for these functors, if $\Pfin$ and $\Dfin$ commute with quantification over clocks. Note that $\Pfin$ and
$\Dfin$ are both generated by algebraic theories. 
For example, $\Pfin(X)$ is the free join semilattice on $X$, so generated by $\emptyset$, finite union,
and equations for associativity, commutativity, idempotency and a unit axiom for $\emptyset$. 
(See Section~\ref{sec:syntactic:condition} for details.) The natural way to program with these in type theory is 
therefore as quotient inductive types, using the associated recursion and induction principles. As an example, we recall the 
induction principle for $\Pfin(X)$ just in the case of predicates: 
Given $\phi : \Pfin(X) \to \prop\Delta$, in order to prove $\forall A : \Pfin(X). \phi(A)$, it suffices to prove
$\phi(\emptyset)$, $\phi(\{x\})$ for all $x : X$, and that $\phi(A)$ and $\phi(B)$ implies $\phi(A \cup B)$. 

Given a coalgebra $\xi : X \to F(X)$ one would also like to define notions of similarity and bisimilarity as coinductive predicates. For example, 
in the case of $F(X) = \Pfin(A\times X)$, $R$ is a simulation relation if whenever $R(x,x')$ holds, also the following holds
\[
 \forall a,y . (a,y) \in \xi(x) \to \exists y'. (a,y') \in \xi(x') \wedge R(y,y')
\]
and simulation is the largest simulation relation. Here, the predicate $(a,y) \in \xi(x)$ is defined by $\Pfin$-recursion. 
Supposing, for example, that $A, X : \univ\emptyset$, simulation should be 
the final coalgebra for the functor mapping $R : X \times X \to \Prop_\emptyset$ to 
\[
 G(R)(x,x') \defeq \forall a,y . (a,y) \in \xi(x) \to \exists y'. (a,y') \in \xi(x') \wedge R(y,y')
\]
In order to define simulation, one first defines $\similar^\kappa$ by guarded recursion to satisfy
\[
  x \similar^\kappa x' \defeq \forall a,y . (a,y) \in \xi(x) \to \exists y'. (a,y') \in \xi(x') \wedge \later^\kappa (y \similar^\kappa y')
\]
then define $x\similar x'$ as $\forall\kappa . x \similar^\kappa x'$. This definition will work if $G$ commutes with clock quantification. 
The crucial step in showing that, is to show that existential quantification and universal quantification over clocks commute. As we shall see, 
this is not a property that holds generally in the model. However, in this particular case, the natural thing to do is to redefine the predicate 
$\exists y'. (a,y') \in \xi(x') \wedge R(y, y')$ by $\Pfin$-recursion over $\xi(x')$:
\begin{itemize}
\item If $A = \emptyset$ then $\exists y'. (a,y') \in A \wedge R(y, y')$ is false
\item If $A = \{(a',y')\}$ then $\exists y'. (a,y') \in A \wedge R(y, y')$ is defined to be $a = a' \wedge R(y, y')$
\item If $A = B \cup C$ then $\exists y'. (a,y') \in A \wedge R(y, y')$ is defined as 
\[(\exists y'. (a,y') \in B \wedge  R(y, y')) \vee (\exists y'. (a,y') \in C \wedge R(y, y'))\]
\end{itemize}
Commutativity of the two quantifications can then be proved by $\Pfin$-induction if just $\forall\kappa. (\phi \vee \psi)$ 
is equivalent to $(\forall\kappa . \phi) \vee (\forall\kappa . \psi)$
and $\forall\kappa . \false$ is false. Both of these hold in the presheaf model of Clocked Type Theory independently of the indexing ordinal. 


\subsection{Modelling Programming Languages with Probabilistic Choice}

Stassen et al.~\cite{POPL25} construct a model of Probabilistic FPC, a programming language with recursive types and finite probabilistic choice, in 
Clocked Cubical Type Theory. One of the basic ingredients in this interpretation is a monad modelling the combination of recursion and 
probabilistic choice. In its guarded form, this monad maps a type $A : \univ\emptyset$ to the guarded recursive type 
$\grD A$ satisfying $\grD A \iso \Dfin(A + \later^\kappa(\grD A))$ where $\Dfin$ is the finite distribution monad. The coinductive version of 
this monad satisfies $\Dforall A \iso \Dfin(A + \Dforall A)$. Elements of 
this monad are distributions of values of type $A$ and computations that can run for one more step. 
As before, the correctness of the definition $\Dforall A \defeq \forall\kappa. \grD A$
requires $\Dfin$ commutes with clock quantification. 

The monads $\grD$ and $\ciD$ are used for both denotational and operational semantics. In order to prove an adequacy theorem, 
the paper constructs a  relation between syntax and semantics using guarded recursion. 
One crucial ingredient in this definition is a notion of lifting of a relation on values 
to one on computations. In terms of the universes of this paper, one can express this lifting as mapping a relation 
$R : A \to B \to \prop{\Delta, \kappa}$ where $A : \univ{\Delta,\kappa}$ and $B : \univ\Delta$ to a relation 
$\overline R^\kappa : \grD A \to \Dforall B \to \prop{\Delta,\kappa}$. Note the asymmetry: $A$ and $B$ live in different universes, and 
different monads are applied to them. The definition of $\overline R^\kappa$ is by case analysis of the first argument: An element in $\grD A$ is either
a distribution just of values of type $A$, one just of delayed computations, or a convex combination of the two: $\mu_1 \oplus_p \mu_2$ where 
$\mu_1 : \Dfin(A)$ and $\mu_2 : \Dfin(\later^\kappa (\grD A))$. (Here and below we leave the inclusions of $\Dfin(A)$ and $\Dfin(\later^\kappa (\grD A))$
into $\grD A$ implicit for readability.)
Which of the three cases an input is in is decidable and, in the latter case 
$\mu_1$ and $\mu_2$ and $p$ can be computed from the input. In the latter case the definition reads
\[
 (\mu_1 \oplus_p \mu_2) \overline R^\kappa \nu \defeq \exists \nu_1 : \Dfin(B), \nu_2 : \Dfin(\Dforall B) . 
 (\nu \leadsto^\approx \nu_1 \oplus_p \nu_2) \wedge \mu_1 \overline R^\kappa \nu_1 
 \wedge \mu_2 \overline R^\kappa \nu_2 .
\]
where $\leadsto^\approx$ is a form of reduction relation on $\Dforall$. 

One important result of the paper (Lemma~7.3) states that if $A : \univ\Delta$, 
the relation $\overline R : \Dforall A \to \Dforall B \to \prop\Delta$ defined as $\overline R (\mu, \nu) \defeq \forall\kappa . \overline R^\kappa (\capp \mu, \nu)$
is equivalent to a coinductive definition of $\overline R$ which is similar to the definition of $\overline R^\kappa$. The proof uses 
that this coinductive definition can be defined via a guarded recursive definition, and this in turn relies on the existential quantifications of 
$\nu_1$ and $\nu_2$ to commute with quantification over clocks. Unlike the existential quantifications used in the previous paragraph, 
these quantifications cannot be rewritten to inductions over a quotient inductive type. In the models presented in this paper, existential quantification 
will only commute with clock quantification if the existential quantification is of bounded size. The result as stated in this generality will therefore not be true 
in these models. However, Stassen et al. only apply this theorem to the very specific case of $\nu : \Dforall 1$, and the quantifications are therefore over 
$\nu_1 : \Dfin(1)$ and $\nu_2 : \Dfin(\Dforall 1)$, so of bounded size. 

\subsection{Weak Bisimilarity}

The coinductive delay monad satisfying $\ciD(A) \iso A + \ciD(A)$, 
is an intensional model of recursion because it counts computation steps. Often one wants to reason up to weak bisimilarity. 
However, working with weak bisimilarity is quite subtle. For example, proving that the quotient of the coinductive delay monad by weak bisimilarity
is a monad seems to require countable choice~\cite{quotientingDelay}. 
Rather than working with the quotient, one therefore often works in a category of setoids. 
For example, Møgelberg and Zwart~\cite{FewExtraPages} prove that if $T$ is a free model monad for an algebraic theory with no drop equations, 
then there is a distributive law of T over the coinductive delay monad quotiented as a setoid. 

Recall that coinductive delay monad 
can be defined as $\ciD (A) \defeq \forall\kappa . \grD(A)$ where 
$\grD(A) \iso A + \later^\kappa \grD (A)$. Given a setoid $(X,R)$, we would like to define the setoid $(\ciD(X), \sim_R)$
capturing weak bisimilarity. 
Writing
\begin{align*}
 \now & : A \to \ciD(A) & \step : \ciD(A) \to \ciD(A) \\
 \now^\kappa & : A \to \grD(A) & \step^\kappa : \later^\kappa\grD(A) \to \grD(A) 
\end{align*}
for the left and right inclusions into the sum, the relation $\sim_R$ should be defined by coinduction and cases:
\begin{align*}
 \now(a) \sim_R y & \defeq \exists (n : \N, y' : X) . y = \step^n(\now(y')) \wedge R(a, y') \\
 x \sim_R \now (a) & \defeq \exists (n : \N, x' : X) . x = \step^n(\now(x')) \wedge R(x', a) \\
 \step (x) \sim_R \step (y) & \defeq x\sim_R y
\end{align*}
This coinductive predicate can be encoded using guarded recursion, because the function mapping a relation $\sim_R$ to $F(\sim_R)$ 
defined as above commutes with clock quantification. Indeed, $x\sim_R y$ does not appear in the two first clauses,
and so in these cases the required condition follows from the axiom of clock irrelevance. In the last clause the case is immediate. 

However, in order to use guarded recursion for programming with weak bisimilarity, Møgelberg and Zwart~\cite{FewExtraPages} 
use a different definition: They define 
\[
  x \sim_R y \defeq \forall\kappa . \capp x \sim_R^\kappa \capp y
\]
where $\sim_R^\kappa : \grD X \to \grD X \to \prop{\Delta,\kappa}$ is defined also for
$X : \univ{\Delta,\kappa}$ and $R : X \to X \to \prop{\Delta,\kappa}$ as the unique relation satisfying
\begin{align*}
 \now^\kappa(a) \sim_R^\kappa y & \defeq \exists (n : \N, y' : X) . y = (\step^\kappa \circ \nextop)^n(\now^\kappa(y')) \wedge R(a, y') \\
 x \sim_R^\kappa \now^\kappa (a) & \defeq \exists (n : \N, x' : X) . x = (\step^\kappa \circ \nextop)^n(\now^\kappa(x')) \wedge R(x', a) \\
 \step^\kappa (x) \sim_R^\kappa \step^\kappa (y) & \defeq \latbind\tickA\kappa{ (\tapp x \sim^\kappa_R \tapp y)}
\end{align*}
Is this the same as the usual coinductive definition? When comparing the two we can assume $X : \univ\Delta$ and 
$R : X \to X \to \prop\Delta$. The critical cases are the first two, so suppose $x$  is $\now(a)$. Then $\forall\kappa . \capp x \sim_R^\kappa \capp y$
reduces to 
\[
\forall\kappa. \exists (n : \N, y' : X) . \capp y = (\step^\kappa \circ \nextop)^n(\now^\kappa(y')) \wedge R(a, y')
\]
We would like to commute $\forall\kappa$ past the two existential quantification, after which the equivalence follows by the relationship
between $\step$ and $\step^\kappa$ and the clock irrelevance axiom applied to $R(a,y')$. The first quantification is over a countable 
set, so will commute with $\forall\kappa$ if the indexing in the model is over a sufficiently high ordinal. The other one is more difficult
because we want to define a generally applicable monad, and so should not put restrictions on the cardinality of $X$. 
On the other hand, the $y'$ is essentially unique because 
\[
(\step^\kappa \circ \nextop)^n(\now^\kappa(y')) = (\step^\kappa \circ \nextop)^n(\now^\kappa(y''))
\]
implies $\later^n(y' = y'')$. It therefore suffices to show that clock quantification commutes with existential 
quantifications that are unique in this sense. 



\subsection{Summary Requirements}
\label{sec:requirements}

Having analysed the applications in detail, we now summarise the requirements needed for universal 
quantification over clocks in order to express the applications above. They are
\begin{itemize}
\item Universal quantification needs to commute with certain quotient-inductive type constructors such as finite powersets and finite distributions
\item Universal quantification needs to commute with  bounded existential quantification, so over sets that are below a certain 
cardinality
\item Universal quantification needs to commute with existential quantification that is essentially unique in the sense that if $X : \univ\Delta$ and
$\phi: X \to \prop{\Delta, \kappa}$ is such that $\phi(x) \wedge \phi(y)$ implies $(\later^\kappa)^n(x=y)$, then 
$\forall\kappa . \exists x:X. \phi(x)$ implies $\exists (x:X). \forall\kappa . \phi(x)$. 
\end{itemize}

\section{The Presheaf Model}
\label{sec:model}

The section recalls the model of multi-clocked guarded recursion described by Mannaa et al.~\cite{clottmodel}. The model is 
indexed by a limit ordinal $\indexord$, but Mannaa et al. only describe the specific case of $\indexord = \omega$. All constructions generalise 
directly, however\footnote{The abstract description of the left adjoint to $\tlater$ does not immediately generalise, but the concrete version,
which is used throughout that paper, does.}. 
The model uses a presheaf category over a category $\catT$ of time objects defined as pairs $\timeobj\FSA\vartheta$ where
$\FSA$ is a finite subset of some fixed countably infinite set of clock names and $\vartheta : \FSA \to \indexord$. A morphism from $\timeobj\FSA\vartheta$ 
to $\timeobj\FSB{\vartheta'}$ is a map $\sigma : \FSA \to \FSB$ such that $\vartheta' \circ \sigma \leq \vartheta$ in the pointwise order. 

Let $\grtotal$ be the category $\PSh(\catT) \defeq \Set^{\catT}$ of \emph{covariant} presheaves on $\catT$. We write $\sigma \cdot x$ for $X(\sigma)(x)$ when 
$X$ is a presheaf, $x \in X\timeobj\FSA\vartheta$ and $\sigma :\timeobj\FSA\vartheta \to \timeobj\FSB{\vartheta'}$. There is an object of clocks 
$\clk$ in $\grtotal$ defined as $\clk\timeobj\FSA\vartheta = \FSA$. This is used to model assumptions of the form $\kappa : \clocktype$
in contexts. 

The operation $\later$ can be described semantically using an endofunctor $\tlater$ on the slice category $\slice \grtotal\clk$. To describe this, 
first recall that any slice category of a presheaf category is equivalent to the category of presheaves over the elements of the 
slicing object. In this particular case, the category of elements $\int\clk$ has as objects triples 
$\triple\FSA\vartheta\lambda$, where $\lambda \in \FSA$ 
and morphisms from $\triple\FSA\vartheta\lambda$ to $\triple\FSB{\vartheta'}{\lambda'}$ are $\catT$-morphisms 
$\sigma : \timeobj \FSA\vartheta \to \timeobj\FSB{\vartheta'}$ such that $\sigma(\lambda) = \lambda'$. The modality $\tlater$ can then be defined 
as 
\begin{align*}
 (\tlater X)\triple \FSA\vartheta\lambda & = \lim_{\alpha < \vartheta(\lambda)} X\triple\FSA{\vartheta[\lambda \mapsto \alpha]}{\lambda}
\end{align*}
Note how this generalises the definition of $\tlater$ in the case of the topos of trees ($\PSh(\opcat\omega)$): $(\tlater X)(0) = 1$, $(\tlater X)(n+1) = X(n)$
to higher ordinals and the multi-clocked setting. 

The model constructed by Mannaa et al. follows the standard construction of a semantic model of type theory from presheaf models: 
Semantic contexts are objects $\Gamma$ of $\grtotal$, and semantic types in semantic context $\Gamma$ are presheaves over $\int \Gamma$. 
So a semantic type assigns, to each $\timeobj\FSA\vartheta$ and $\gamma \in \Gamma\timeobj\FSA\vartheta$ a set $X\triple\FSA\vartheta\gamma$, 
and to each $\sigma : \timeobj\FSA\vartheta \to \timeobj\FSB{\vartheta'}$ a morphism $\sigma \cdot(-) : X\triple\FSA\vartheta\gamma \to 
X\triple\FSB{\vartheta'}{\sigma \cdot \gamma}$. In order to interpret the clock irrelevance axiom, the model restricts attention to 
semantic types that are clock irrelevant in the following sense. 

\begin{definition}[\cite{GDTTmodel}] \label{def:invariant:under:clock:intro}
 Let $\Gamma$ be an object of $\grtotal$. A presheaf $X$ over $\int\Gamma$ is said to be \emph{invariant under clock introduction}
 if, for any $\timeobj\FSA\vartheta$, $\lambda\notin \FSA$ and $\alpha$, the map 
 \[
  \iota \cdot(-) :  X\triple\FSA\vartheta\gamma \to X\triple{(\FSA,\lambda)}{\vartheta[\lambda \mapsto \alpha]}{\iota \cdot \gamma}
 \]
 is an isomorphism, where $\FSA, \lambda$ is shorthand for $\FSA \cup \{\lambda\}$ and 
 $\iota : \FSA \to (\FSA,\lambda)$ is the inclusion. 
\end{definition}
For example, if $\Gamma$ is the terminal object $1$, a presheaf $X$ in $\PSh(\int 1) \iso \grtotal$ is invariant under clock introduction if and 
only if each $\sigma\cdot(-) : X\timeobj\emptyset{!} \to X\timeobj\FSA{\vartheta}$ is an isomorphism, and so $X$ is a constant presheaf. 
Similarly, the subcategory of presheaves over $\int\clk$ invariant under clock introduction is equivalent to the topos of trees $\Set^{\opcat\omega}$. 
On the other hand, $\clk$ is an example of a presheaf that is not invariant under clock introduction. 

The collection of presheaves invariant under clock introduction is closed under standard type formers such as $\Sigma$, $\Pi$ and universal 
quantification over clocks~\cite{GDTTmodel}. Universal quantification over clocks in fact is just modelled as a $\Pi$-type indexed by $\clk$. 
We recall~\cite{GDTTmodel} an alternative, direct definition of this which we will need in the following sections. Suppose $\Gamma$ is an object of $\grtotal$
and $A$ is a (semantic) dependent type over $\Gamma\times \clk$. This means that for any $\timeobj\FSA\vartheta$, $\gamma\in \Gamma\timeobj\FSA\vartheta$
and $\lambda\in \FSA$, $A\quadruple\FSA\vartheta\gamma\lambda$ is a set. Suppose we have a function that assigns, to each $\FSA$ an element
$\lambda_\FSA \notin \FSA$, and define $\forall(A)$ as a dependent type over $\Gamma$ by 
\[
 \forall(A)\triple\FSA\vartheta\gamma \defeq \lim_{\alpha<\indexord} A\quadruple{(\FSA,\lambda_\FSA)}{\vartheta[\lambda_\FSA \mapsto \alpha]}{\iota\cdot\gamma}{\lambda_\FSA}
\] 
where $\iota : \FSA \to (\FSA, \lambda_\FSA)$ is the inclusion. The axiom (\ref{eq:force}) can then be proved as follows 
\begin{align*}
 \forall(\tlater A)\triple\FSA\lambda\gamma & \defeq \lim_{\alpha<\indexord} \lim_{\beta<\alpha} A\quadruple{(\FSA,\lambda_\FSA)}{\vartheta[\lambda_\FSA \mapsto \beta]}{\iota\cdot\gamma}{\lambda_\FSA} \\
 & \iso\lim_{\alpha<\indexord} A\quadruple{(\FSA,\lambda_\FSA)}{\vartheta[\lambda_\FSA \mapsto \alpha]}{\iota\cdot\gamma}{\lambda_\FSA} \\
 & = \forall(A)\triple\FSA\lambda\gamma
\end{align*}
which works only because $\indexord$ is assumed to be a limit ordinal. Note also that this definition makes it clear that 
clock quantification distributes over sums. 

\begin{terminology}
 In the following, given an object $\Gamma$ of $\grtotal$, we shall refer to a presheaf $X$ over $\int\Gamma$ which is invariant under clock introduction
 as a \emph{family} over $\Gamma$. The sets $X\triple\FSA\vartheta\gamma$ are referred to as the \emph{fibres} of $X$, and if all fibres of $X$ are
 subsingletons, we say that $X$ is a predicate on $\Gamma$. 
\end{terminology}

%

\subsection{Universes}

If $\Delta$ is a clock context (so a finite set of clock variables), form the presheaf $\clk^\Delta$ in $\grtotal$ as the $\Delta$-fold product
of $\clk$ by itself. Write $\gr{\Delta}$ for $\PSh(\int \clk^\Delta)$. Assuming a set theoretic universe of small sets, 
there is an object $\Usem\Delta$ in $\gr\Delta$ and a family $\Elsem\Delta$ over it such that 
\begin{itemize}
\item $\Usem\Delta$ is 
invariant under clock introduction
\item For any $\Gamma$ in $\gr\Delta$ and $A$ a family over $\Gamma$ with small fibres,
there is a unique map $\code A : \Gamma \to \Usem\Delta$ such that $A = \Elsem\Delta[\code A]$. 
\item This universe is closed under $\Pi$- and $\Sigma$-types, universal quantification over clocks, $\later^\kappa$ for 
$\kappa \in \Delta$. 
\end{itemize}
The construction uses a variant of the Hofmann-Streicher construction of universes in presheaf categories~\cite{Hofmann-Streicher:lifting}: 
An element of $\Usem\Delta\triple\FSA\vartheta {f: \Delta \to \FSA}$, is a family over $y(f[\Delta], \restrict{\vartheta}{f[\Delta]})$ in 
$\grtotal$ with small fibres, where $y$ is the yoneda embedding, and $f[\Delta]\subseteq \FSA$ is the image of $f$. 
Concretely, this means that an element is a family of small sets 
$A_\sigma$ indexed by the diagram whose components
are maps $\sigma : (f[\Delta], \restrict{\vartheta}{f[\Delta]}) \to \timeobj\FSB{\vartheta'}$, which is invariant under clocks in the sense
that $\iota\cdot(-) : A_{\sigma} \to A_{\iota\circ\sigma}$ is an isomorphism for all $\iota$ as in Definition~\ref{def:invariant:under:clock:intro}. 
For example, $\univ\emptyset$ is an object in $\gr\emptyset \iso \grtotal$, and this is 
the constant presheaf of small $\grtotal$ objects that are invariant under clock introduction. 
The universes of Clocked Type Theory can be modelled using these universes by appropriate reindexing. We refer to 
Mannaa et al.~\cite{clottmodel} for further details. 
For this paper we also need a family of universes of propositions.

\begin{proposition}
 There exists an object $\Propsem\Delta$ of $\gr\Delta$ and a predicate $\PElsem\Delta$ on it such that 
\begin{enumerate}
\item $\Propsem\Delta$ is invariant under clock introduction
\item\label{item:prop:classifies} For any $\Gamma$ in $\gr\Delta$ and $A$ predicate on $\Gamma$, 
there is a unique map $\code A : \Gamma \to \Propsem\Delta$ such that $A \iso \PElsem\Delta[\code A]$. 
\item This universe is closed under $\forall$- and $\exists$-quantification over objects invariant under clock introduction, 
finite conjunction and disjunction, universal quantification over clocks, and $\later^\kappa$ for $\kappa \in \Delta$. 
\end{enumerate} 
\end{proposition}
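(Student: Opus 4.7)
The plan is to define $\Propsem\Delta$ as the subobject of $\Usem\Delta$ consisting of codes whose classified family is subsingleton-valued, following the same Hofmann-Streicher-style recipe as for $\Usem\Delta$. Concretely, an element of $\Propsem\Delta\triple\FSA\vartheta{f}$ is a family over $y(f[\Delta], \restrict{\vartheta}{f[\Delta]})$ of small subsingletons, invariant under clock introduction in the sense of Definition~\ref{def:invariant:under:clock:intro}. The classifier $\PElsem\Delta$ is then the restriction of $\Elsem\Delta$ along the inclusion $\Propsem\Delta \hookrightarrow \Usem\Delta$; its fibres are subsingletons by construction.

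Invariance of $\Propsem\Delta$ under clock introduction is inherited from the corresponding property of $\Usem\Delta$, because the isomorphism $\iota\cdot(-)$ restricts to codes of subsingleton-valued families on both sides. For the classification property, if $A$ is a predicate on $\Gamma$ then it is in particular a family of small sets, so the universal property of $\Usem\Delta$ produces a unique $\code A : \Gamma \to \Usem\Delta$ with $A \iso \Elsem\Delta[\code A]$; the subsingleton property of each fibre of $A$ ensures this map factors uniquely through the inclusion $\Propsem\Delta \hookrightarrow \Usem\Delta$, giving the desired classifying map into $\Propsem\Delta$.

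For the closure properties, closure under $\forall$-quantification (both over clocks and over families invariant under clock introduction), under $\tlater$ for $\kappa \in \Delta$, and under finite conjunction is essentially automatic: each of these operations is already defined on $\Usem\Delta$, and each preserves subsingleton fibres, since arbitrary limits of subsingletons are subsingletons. The main obstacle is closure under $\exists$ and $\vee$, where the naive interpretation via $\Sigma$-types and binary coproducts does not preserve subsingletons. The fix is to pass through propositional truncation, defined in this classical metatheory fibrewise by mapping an inhabited small set to a chosen singleton and the empty set to itself; equivalently and more invariantly, one defines $\exists x : A. \phi(x)$ as the image in $\gr{\Delta}$ of the projection of the total space onto its base, and $\phi \vee \psi$ as the subobject union. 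Both operations produce subsingleton-valued families, and because they depend only fibrewise on inhabitation, they preserve invariance under clock introduction, so the resulting codes lie in $\Propsem\Delta$.
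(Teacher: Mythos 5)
Your proposal is correct and follows essentially the same route as the paper: $\Propsem\Delta$ is carved out of $\Usem\Delta$ as the codes of subsingleton-valued families, everything except $\exists$ (and $\vee$) is inherited verbatim, and the existential is handled by the fibrewise union/image construction. The only difference is one of explicitness: where you assert that the image construction preserves invariance under clock introduction "because it depends only fibrewise on inhabitation," the paper writes out the corresponding chain of equalities using invariance of $A$ and $B$ — but that is exactly the argument you are gesturing at.
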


\begin{proof}
Let $\Propsem\Delta\triple\FSA\vartheta {f: \Delta \to \FSA}$ be the subset of $\Usem\Delta\triple\FSA\vartheta {f: \Delta \to \FSA}$ 
consisting of families where each $A_\sigma$ is a subset of $1= \{\star\}$. The proof of almost all the properties above
then carry over verbatim from the case of $\Usem\Delta$, except for closure under existential quantification. 
For that, suppose $\Gamma$ is an object of $\gr\Delta$, $A$ a family over $\Gamma$, 
$\code B : \compr\Gamma A \to \Propsem\Delta$, and $B = \PElsem\Delta[\code B]$. Here 
\[
  \compr\Gamma A\timeobj\FSA\vartheta = \{ (\gamma, a) \mid \gamma \in \Gamma\timeobj\FSA\vartheta \wedge a \in A\triple\FSA\vartheta\gamma\}
\]
We must show that the presheaf $\exists(A,B)$ over $\int\Gamma$ defined by
\[
  \exists(A,B) \triple\FSA\vartheta\gamma \defeq \cup_{a \in A \triple\FSA\vartheta\gamma} B\quadruple\FSA\vartheta\gamma a
\]
has a code $\code{\exists(A,B)} : \Gamma \to \Propsem\Delta$. By item~\ref{item:prop:classifies} of the proposition, it suffices to
show that $\exists(A,B)$ is invariant under clock introduction, and so a predicate. So let 
$\iota : \timeobj\FSA\vartheta \to \timeobj{(\FSA,\lambda)}{\vartheta[\lambda \mapsto \alpha]}$
be given by the inclusion of $\FSA$ into $(\FSA, \lambda)$. Then 
\begin{align*}
 \exists(A,B) \triple{(\FSA,\lambda)}{\vartheta[\lambda\mapsto \alpha]}{\iota\cdot\gamma} 
 & \defeq \cup_{a \in A \triple{(\FSA,\lambda)}{\vartheta[\lambda \mapsto \alpha]}{\iota\cdot\gamma}} B\quadruple{(\FSA,\lambda)}{\vartheta[\lambda \mapsto \alpha]}{\iota\cdot\gamma} a \\
 & = \cup_{a \in A \triple{\FSA}{\vartheta}{\gamma}} B\quadruple{(\FSA,\lambda)}{\vartheta[\lambda \mapsto \alpha]}{\iota\cdot\gamma} {\iota \cdot a} \\
 & = \cup_{a \in A \triple{\FSA}{\vartheta}{\gamma}} B\quadruple{\FSA}{\vartheta}{\gamma} {a} \\
 & = \exists(A,B)\triple\FSA\vartheta\gamma
\end{align*}
using invariance under clock introduction for $A$ and $B$.
%
%
\end{proof}

\subsection{Conservativity Over $\Set$}

One important consequence of the requirement that families be invariant under clock introduction 
is that the fragment of the type theory that does not mention clocks, ticks or guarded 
recursion is interpreted using constant presheaves only, and so the interpretation of this fragment agrees with that of the standard 
model in $\Set$. In particular, any proof of such a statement in the model based on $\grtotal$ gives rise to a proof
in the model based on $\Set$. 

\begin{theorem} \label{thm:conservative:model}
 Let $\istype\Gamma A$ be a wellformed statement in Clocked Type Theory 
 that can also be expressed in the internal type theory of $\Set$. If there is a
 term $t$ such that $\hastype\Gamma tA$, then also the dependent type constructed by interpreting $\istype\Gamma A$ in $\Set$ 
 can be inhabited.
\end{theorem}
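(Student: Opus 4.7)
The plan is to show that the interpretation of a clock-free judgement in the presheaf model factors, up to canonical isomorphism, through the standard set-theoretic interpretation by evaluating at the initial time object $\timeobj\emptyset{!}$ of $\catT$.

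First I would observe that $\timeobj\emptyset{!}$ is initial in $\catT$: a morphism out of it to any $\timeobj\FSA\vartheta$ is a unique function $\emptyset \to \FSA$, and the pointwise inequality on the $\vartheta$-component is vacuous. Consequently evaluation at $\timeobj\emptyset{!}$ is a well-defined functor $\grtotal \to \Set$. As the paper already remarks immediately after Definition~\ref{def:invariant:under:clock:intro}, restricted to presheaves over the terminal object that are invariant under clock introduction, this functor is an equivalence onto $\Set$: invariance forces each clock-inclusion $\iota \cdot (-)$ to be an isomorphism, and a reindexing morphism that merely lowers the ordinal at an existing clock can be exhibited as a composite of clock-inclusion isomorphisms after introducing an auxiliary fresh clock, hence is itself an isomorphism.

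Next, by induction on the syntax of the clock-free fragment---$\Pi$, $\Sigma$, extensional identity, finite sums and products, plus whatever base types are already present in the internal type theory of $\Set$---I would verify that each semantic type former preserves constantness (i.e.\ preserves the subcategory of clock-invariant families) and commutes with evaluation at $\timeobj\emptyset{!}$. The only subtle case is $\Pi$-types, which in a presheaf model are defined via natural families over slices; but once the base and all fibres have identity transition maps, the naturality conditions collapse and one recovers the ordinary function set. All other type formers (including extensional identity) are defined fibrewise, so preservation is immediate. An induction on the derivation of $\istype\Gamma A$ then shows that $\llbracket\Gamma\rrbracket$ and $\llbracket A\rrbracket$ are canonically isomorphic to the constant presheaf/family on the corresponding $\Set$-interpretations.

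Finally, given $\hastype\Gamma tA$ in full Clocked Type Theory---the term $t$ being allowed to use arbitrary clocked and guarded features internally---soundness of the presheaf model yields a section $\llbracket t \rrbracket$ of $\llbracket A \rrbracket$ over $\llbracket\Gamma\rrbracket$. Evaluating this section at $\timeobj\emptyset{!}$ and transporting along the canonical isomorphisms of the previous step produces a $\Set$-theoretic section of the $\Set$-interpretation of $A$, which is the required witness of inhabitation. The main obstacle I anticipate is the verification that the presheaf $\Pi$-type on a constant base with a constant family agrees with the ordinary exponential under evaluation; everything else is routine bookkeeping once the equivalence between clock-invariant families and $\Set$ has been established.
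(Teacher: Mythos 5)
Your proposal is correct and takes essentially the same route as the paper, whose proof is simply the observation that the clock-free fragment is interpreted by constant presheaves (families invariant under clock introduction over a constant context), so its interpretation agrees with the standard $\Set$ model and evaluating the section $\llbracket t\rrbracket$ at the empty time object yields the required inhabitant. Your attention to the $\Pi$-case is the right detail to check, and the slightly roundabout step of showing that ordinal-lowering reindexings act by isomorphisms is handled most cleanly by noting that $\timeobj\emptyset{!}$ is initial in $\catT$, so every morphism sits under it and two-out-of-three applies.
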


Note that $\Gamma$ and $A$ can be constructed using guarded recursion as long as they can be expressed in the internal language
of $\Set$. For example, 
any endofunctor $F$ on $\Set$ mapping small sets to small sets induces a functor 
$F: \univ\emptyset \to \univ\emptyset$ in the internal language
of our model. The type $\forall\kappa . \mu X. F(\later^\kappa X)$ is a closed type, so is interpreted as a set.
We can therefore view the statement that $\forall\kappa . \mu X. F(\later^\kappa X)$ is a final coalgebra for 
$F$ (Theorem~\ref{thm:coinductive:encoding}) as a statement in the internal language of $\Set$, and since this statement has 
a proof in Clocked Type Theory if $F$ commutes with clock quantification, then, 
by Theorem~\ref{thm:conservative:model}, it also holds in the set-theoretic interpretation. 
Of course, this statement will talk about the action of $F$ as an endofunctor on $\univ\emptyset$, which, as mentioned above
is the set of all small objects of $\grtotal$ that are invariant under clock introduction. However, the notion of functor expressed using
this universe in $\Set$ is equivalent to the one expressed using the usual universe of small sets. 
Many of the end results proved in the applications mentioned in Section~\ref{sec:applications}, are of this form, for example, 
existence or non-existence of distributive laws of monads on $\Set$ or setoids, and contextual equivalence of programs of Probabilistic 
FPC. Their proofs use guarded recursion, but by Theorem~\ref{thm:conservative:model} these proofs also prove their usual
set theoretic interpretation. 
%
%
%

\section{Algebraic data types} 
\label{sec:alg:types}

The construction of lifting a set functor F preserving smallness to an endomap on a universe $\Usem\emptyset$ extends to any of the universes
$\Usem\Delta$. For such a functor to commute with universal quantification over clocks means that the canonical map 
\[
 F(\lim_{\alpha<\indexord} A\quadruple{(\FSA,\lambda_\FSA)}{\vartheta[\lambda_\FSA \mapsto \alpha]}{\iota\cdot\gamma}{\lambda_\FSA}) \to 
 \lim_{\alpha<\indexord} F(A\quadruple{(\FSA,\lambda_\FSA)}{\vartheta[\lambda_\FSA \mapsto \alpha]}{\iota\cdot\gamma}{\lambda_\FSA}) 
\]
must be an isomorphism. Since $A$ can be anything, this means that $F$ needs to commute with $\opcat{\indexord}$-indexed limits. 
Theorem~\ref{thm:coinductive:encoding} is therefore
simply a synthetic restatement of Ad{\'a}mek's fixed point theorem~\cite{adamek1974free}: 
The type $\forall\kappa . \mu^\kappa X. F(\later^\kappa X)$ is the limit of the terminal 
sequence 
\[
  1 \from F(1) \from F^2(1) \from \dots \from F^\omega(1) \from F^{\omega +1} \from \dots
\] 
(where $F^\beta(1) = \lim_{\alpha < \beta} F^\alpha$ for limit ordinals $\beta$), and Ad{\'a}mek's theorem states 
that this is a terminal coalgebra if $F$ preserves $\opcat{\indexord}$-indexed limits. 

In this section we look at conditions for functors to commute with ordinal-indexed limits. We are particularly interested in the cases 
of monads generated by algebraic theories, as in the examples of finite powersets and finite distributions. First a few examples. 

\begin{example} \label{ex:fin:powset}
 The finite powerset monad $\Pfin$ does not commute with $\opcat\omega$-limits. In fact the limit of the $\opcat\omega$-long terminal sequence 
 \[
 1 \from \Pfin(1) \from \Pfin^2(1) \from \dots
 \]
 contains infinite sets~\cite{adamek1995greatest, Worrell05}. Worrel~\cite{Worrell05} proved that the terminal sequence converges after $\omega + \omega$ steps, but in general $\Pfin$ does not commute with 
 $\opcat{(\omega+\omega)}$-indexed limits. For a counter example, just append the terminal sequence to an $\omega$-long sequence of $1$. 
 In general, however, $\Pfin$ does commute with 
 $\opcat{\omega_1}$-limits, where $\omega_1$ is the first uncountable ordinal~\cite{adamek1995greatest}.
\end{example}

The next example shows that some monads generated by algebraic theories do not commute with any $\opcat{\indexord}$-limits.

\begin{example} \label{ex:prop:trunc}
 Propositional truncation is the monad $T(X) = \{\star \mid \exists x : X.\true\}$. This monad does not commute with 
 $\opcat{\indexord}$-limits for any $\indexord$. Consider, for example,
 the sequence
 \[
  X_\alpha = \{ \beta \in \indexord \mid \beta > \alpha \}
 \]
 with maps $X_\alpha \to X_\gamma$ for $\gamma \leq \alpha$ given by inclusions. The limit of this sequence is empty, so also $T(\lim X) = \emptyset$, 
 but $T(X_\alpha) = 1$, so $\lim T(X) = 1$. 
\end{example}

\subsection{A Semantic Condition}
\label{sec:semantic:condition}

We now state conditions that imply that a functor preserves $\opcat{\indexord}$-indexed limits. 
Let $\kappa$ be a regular cardinal. 
A set $Y$ is $\kappa$-small if it has cardinality strictly less than $\kappa$, and we 
write $Y\kapsubset X$ to mean that $Y$ is a $\kappa$-small subset of $X$.
Recall~\cite{adamek1994locally} that a functor $F : \Set \to \Set$ is said to be $\kappa$-accessible,
if it preserves $\kappa$-filtered colimits. In particular, for any $X$, the map 
\[
  \colim_{X'\kapsubset X} F(X') \to F(X)
\]
must be an isomorphism (in fact this statement is equivalent to being $\kappa$-accessible~\cite{adamek2019finitary}). 
In words, this means that any $x \in F(X)$ lives already in $F(X')$ for some $\kappa$-small subset $X'$ of $X$.
Say that a functor $F$ preserves pullbacks of monos, if, for any pullback diagram
\[
\begin{tikzcd}
 A \ar[r] \ar{d} \ar[rd, phantom, "\lrcorner"] & B \ar[d, rightarrowtail, "m"] \\
 C \ar[r] & D
\end{tikzcd}
\]
where $m$ is a mono, also $F$ applied to the diagram is a pullback. 
In particular, this means that $F$ preserves monos,
also the non-split ones from the empty set. 

Say that a functor $F$ preserves $\kappa$-small intersections if, for all families  $(X_i\subseteq X)_{i\in I}$
indexed by $\kappa$-small sets $I$, $F(\cap_{i\in I} X_i)$ is the limit of the diagram 
\[
\begin{tikzcd}
F(X_i) \ar{dr} & \dots & \ar{dl} F(X_j) & \dots \\
& F(X)
\end{tikzcd}
\]

%


\begin{theorem} \label{thm:semantic:condition}
 Let $F : \Set \to \Set$ be a $\kappa$-accessible functor which preserves $\kappa$-small intersections and pullbacks of monomorphisms. 
 If $\indexord$ is a regular ordinal, and there exists an $\alpha\in \indexord$ of the same cardinality as $\kappa$, 
 then $F$ commutes with $\opcat{\indexord}$-indexed limits. 
\end{theorem}

For the proof we need the following. 

\begin{lemma}\label{lem:bij:from:some:point}
  Let $\indexord$ be a regural ordinal, and let $X : \opcat{\indexord} \to \Set$ be a diagram
  whose maps $X_\beta \to X_\alpha$ are all surjective. If, for any $\alpha\in \indexord$,
  there exists an ordinal  $\beta >\alpha$ such that $X_\beta \to X_\alpha$ is not injective,
  then, for any $\alpha \in \indexord$ there is 
  a $\beta(\alpha)$ and an injection $\alpha \to X_{\beta(\alpha)}$. 
\end{lemma}

\begin{proof}
  To construct $\beta(\alpha)$ we construct first a map
  $\gamma: \indexord \to \indexord$ by induction on the input: If $\alpha$ is a successor,
  so $\alpha = s(\alpha')$, define $\gamma(\alpha)$ to be an ordinal such that 
  $X_{\gamma(\alpha)} \to X_{\gamma(\alpha')}$ is not injective. If $\alpha$ is a 
  limit ordinal, define 
  \[
   \gamma(\alpha) = \bigvee_{\alpha'<\alpha}\gamma(\alpha')
  \]
  In this case $\gamma(\alpha)$ is in $\indexord$ because $\indexord$ is regular. 
  Using this, define $\beta(\alpha) = \gamma(s(\alpha))$.
  Next we construct injections $f_{\alpha'} : \alpha' \to X_{\beta(\alpha)}$ for 
  $\alpha' < s(\alpha)$ by induction on $\alpha'$ such that
  \begin{itemize}
    \item $f_{\alpha'}$ extends $f_{\alpha''}$ whenever $\alpha''<\alpha'$ and
    \item the composition $X_{\beta(\alpha), \gamma(\alpha')}\circ f_{\alpha'} : \alpha' \to X_{\beta(\alpha)} \to X_{\gamma(\alpha')}$
is injective for all $\alpha'$.
  \end{itemize}
  Again, we proceed by cases of $\alpha'$: If $\alpha' = s(\alpha'')$, since 
  $X_{\gamma(\alpha')} \to X_{\gamma(\alpha'')}$ is not injective, and $X_{\beta(\alpha), \gamma(\alpha'')}\circ f_{\alpha''}$ is
  injective, there must be an element $x\in X_{\gamma(\alpha')}$ outside the image of $X_{\beta(\alpha), \gamma(\alpha')}\circ f_{\alpha''}$.
  We can therefore extend $f_{\alpha''}$ to $f_{\alpha'}$ by picking $f_{\alpha'}(\alpha'')$ to be an element in $X_{\beta(\alpha)}$
  that maps to $x$ by $X_{\beta(\alpha), \gamma(\alpha')}$. When $\alpha'$ is a limit ordinal, we can simply define 
  $f_{\alpha'}(\alpha'') = f_{s(\alpha'')}(\alpha'')$. 
\end{proof}

\begin{proof}[Proof of Theorem~\ref{thm:semantic:condition}]
 First note that if $x \in F(X)$, then there exists a smallest subset $X'\kapsubset X$ such that $x \in F(X')$ (leaving the action of $F$
 on the inclusion implicit): There exists an $X'$ because $F$ is $\kappa$-accessible, and the smallest is then the intersection of all $X'\setminus\{x'\}$ such 
 that $x \in F(X'\setminus \{x'\})$. 

 
 Let $X$ be a $\opcat{\indexord}$-indexed diagram. We must show that the canonical map
 \[
  F(\lim X) \to \lim F(X)
 \]
 is an isomorphism. To show surjectivity, suppose $x \in \lim F(X)$. Let $X'_\alpha$ be the smallest subset  
 of $X_\alpha$ such that $x_\alpha \in F(X'_\alpha)$. We show that these form a subdiagram of $X$. Suppose $\beta \leq \alpha$ and 
 let $r : X_\alpha \to X_\beta$ be the restriction map. Then clearly $X'_{\beta} \subseteq r[X'_\alpha]$ by minimality of $X'_\beta$. Since $F$ 
 preserves pullbacks of monomorphisms, the diagram
\[
\begin{tikzcd}
  F(\inv r(X'_\beta)\cap X'_\alpha) \ar{r} \ar{d} & F(X'_\beta) \ar{d} \\
  F(X'_\alpha) \ar{r} &  F(r[X'_\alpha])
\end{tikzcd}
\]
is a pullback, so $x_\alpha \in F(\inv r(X'_\beta))$, and by minimality then $\inv r(X'_\beta)\cap X'_\alpha = X'_\alpha$. So 
\[
  r[X'_\alpha] = r[\inv r(X'_\beta)\cap X'_\alpha] \subseteq X'_\beta
\]
and so $X'_{\beta} = r[X'_\alpha]$. So $X'$ is a subdiagram, and each map in $X'$ is surjective. 

We next show that the maps $X'_\beta \to X'_\alpha$ must be bijections from some point on. If not, 
by Lemma~\ref{lem:bij:from:some:point}, for any $\alpha \in \indexord$ there must be an injection
of $\alpha$ into some $X'_{\beta(\alpha)}$, but since there is an $\alpha$ of
cardinality $\kappa$, this contradicts $X'_{\beta(\alpha)}$ being $\kappa$-small.

%

So there must be a stage $\alpha$ such that all maps $X'_\beta \to X'_\alpha$ are isomorphisms. 
Then $F(\lim X') \iso F(X'_{\alpha}) \iso \lim F(X')$, and so $x$ is in the image of the comparison map via the inclusion of $F(\lim X')$ into $F(\lim X)$:
\[
\begin{tikzcd}
 F(\lim X') \ar[r, "\iso"] \ar{d} & \lim F(X') \ni x \ar{d} \\
 F(\lim X) \ar{r} & \lim F(X)  
\end{tikzcd}
\]
For injectivity, suppose $a, b \in F(\lim X)$ map to the same element in $\lim F(X)$, so that $F(\pi_\alpha) a = F(\pi_\alpha) b$ for any $\alpha$. 
Let $X_a' \subseteq \lim X$ be the smallest subset such that $a \in F(X'_a)$ and similarly for $X_b'$. For each $x,y \in X_a'$ 
with $x \neq y$ there is an $\alpha$ 
such that $x_\alpha \neq y_\alpha$. Let $\alpha$ be the upper bound of these as $x,y$ range over all pairs of different 
elements in $X'_a$. Note that this is an element in $\indexord$ because $\indexord$ is regular, and because $X_a$ has cardinality
strictly smaller than some element in $\rho$. Then $\pi_\beta : X'_a \to \pi_\beta[X_a']$ is 
a bijection whenever $\beta \geq \alpha$. Similarly, there is a $\gamma$ such that for all $\beta \geq \gamma$, 
$\pi_\beta : X'_b \to \pi_\beta[X_b']$  is a bijection. Let $\beta$ the least upper bound of $\alpha$ and $\gamma$. 
Then $\pi_\beta[X_a']$ is the smallest subset of $X_\beta$ such that $F(\pi_\beta)(a)$ is in $F(\pi_\beta[X_a'])$,
because a smaller subset $X'' \subset \pi_\beta[X_a']$ would give a subset $\inv{\pi_\beta}(X'') \subset X'_a$
with $a \in F(\inv{\pi_\beta}(X''))$. Similarly, $\pi_\beta[X_b']$ is the smallest subset 
such that $F(\pi_\beta)(b)$ is in $F(\pi_\beta[X_b'])$.
Since $F(\pi_\beta)(a) = F(\pi_\beta)(b)$, we must have $\pi_\beta[X_b'] = \pi_\beta[X_a']$.
So then, $X'_a \iso \pi_\beta[X_a'] \iso X'_b$ via an isomorphism $f$ such that $f(x)_\beta = x_\beta$ from some $\beta$ on. 
We conclude that $X'_a = X'_b$. So $a,b \in F(X'_a)$ map to the same element in $F(\pi_\beta(X'_a))$ by the isomorphism
$F(\pi_\beta)$, and therefore must be equal. 
\end{proof}

\subsection{A Syntactic Condition}
\label{sec:syntactic:condition}

We recall the notion of  $\kappa$-ary algebraic theory for $\kappa$ a regular cardinal. 
A signature consists of a set of operations $\Sigma$ all associated with an arity, which is a $\kappa$-small set. 
Given a signature, one can form the set $\Tm(X)$ of terms with free variables in $X$ inductively in the standard way:
\begin{itemize}
\item If $x \in X$ then $x \in \Tm(X)$
\item If $\op \in \Sigma$ has arity $A$ and $t: A \to \Tm(X)$, then $\op(t) \in \Tm(X)$
\end{itemize}
An equation is a pair of terms $(t,s)$ in some $\Tm(X)$. An algebraic 
theory is a signature together with a set of equations over that signature. 

An algebraic theory induces a monad on $\Set$ by the definition
\[
 T(X) = \Tm(X) / \sim
\]
where $\sim$ is the smallest congruence relation generated by the equations of the theory. It is well-known (see e.g.~\cite{adamek1994locally}) 
that the underlying functor of $T$ is then $\kappa$-accessible.  

%
%
%

\begin{example}
 The finite powerset monad is generated by the finitary ($\aleph_0$-ary) 
 theory consisting of two operations $\bot$ of arity $\emptyset$ and $\vee$ of arity $\{0,1\}$. 
 We write the latter as an infix binary operator. The equations are 
\begin{align*}
 x \vee y & = y \vee x & (x \vee y) \vee z & = x \vee (y \vee z) & x \vee x & = x & x \vee \bot & = x
\end{align*}
\end{example}
\begin{example}
The countable powerset monad is generated by the theory consisting of $\bot$ of arity $\emptyset$ and $\bigvee$ of arity $\N$. 
Following Chapman et al.~\cite{quotientingDelay}, the equations can be expressed by writing $x \vee y$ for 
$\bigvee(x, y,y,y,\dots)$  as
\begin{align*}
x \vee y & = y \vee x & (x \vee y) \vee z & = x \vee (y \vee z) & x \vee x & = x & x \vee \bot & = x \\
x_i \vee (\bigvee x) & = \bigvee x  & (\bigvee x) \vee y & = \bigvee (x_i \vee y)_{i \in \N}
\end{align*}
This is an $\aleph_1$-ary algebraic theory. 
\end{example}

\begin{example}
 The finite distributions monad is generated by a family of binary operations $\oplus_p$ indexed by $p$ in the open interval $(0,1)$
 together with the equations 
\begin{align*}
 x \oplus_p x & = x & x \oplus_p y & = y \oplus_{1-p} x & 
 (x \oplus_p y) \oplus_q z & = x \oplus_{pq} (y \oplus_{\frac{(1-p)q}{1-pq}} z) 
\end{align*}
\end{example}

\begin{example}
 The propositional truncation monad of Example~\ref{ex:prop:trunc} is generated by the algebraic theory of 
 no constructors and a single equation $(x=y)$. 
\end{example}

We now give a syntactic condition for an algebraic theory ensuring that the generated monad satisfies the conditions of Theorem~\ref{thm:semantic:condition}. 

\begin{definition}
An equation $(t,s)$ is a drop equation if $\fv t \neq \fv s$. 
\end{definition}
Here $\fv t$ for $t \in \Tm(X)$ is the smallest subset $X'\subseteq X$ such that $t \in \Tm(X')$.
Of the four examples mentioned above, only propositional truncation has drop equations. 
Note that if a theory has no drop equations, then also the least congruence on terms generated by the equations has no drop equations. 


\begin{theorem} \label{thm:syntactic:condition}
 Suppose $T$ is a monad generated by a $\kappa$-ary algebraic theory with no drop equations. Then $T$ preserves  
 pullbacks of monomorphisms and $\kappa$-small intersections. As a consequence, if $\indexord$ is a regular ordinal
 with an $\alpha \in \indexord$ of same cardinality as $\kappa$,  
 then $T$ preserves $\opcat{\indexord}$-limits.
\end{theorem}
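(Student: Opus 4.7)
The plan is to extract from the no-drop hypothesis a well-defined notion of support on equivalence classes. First I would show that for any $X$ the free-variable set $\fv{t}$ is invariant under the congruence $\sim$ generated by the axioms, yielding a function $\fv{-} : T(X) \to \Pfin(X)$. This is an induction on derivations: any substitution instance $t[\sigma] \sim s[\sigma]$ of an axiom $(t,s)$ satisfies $\fv{t[\sigma]} = \bigcup_{v \in \fv{t}} \fv{\sigma(v)}$ and likewise for $s$, so the no-drop assumption $\fv{t} = \fv{s}$ immediately yields $\fv{t[\sigma]} = \fv{s[\sigma]}$; the congruence rules (reflexivity, symmetry, transitivity, and compatibility with the operations) trivially preserve free-variable sets. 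From $T(f)([t]) = [t[f(x)/x]]$ one reads off the naturality identity $\fv{T(f)([t])} = f(\fv{[t]})$.

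Second, I would use this to show that $T$ sends injections to injections. Given $i : X \hookrightarrow Y$ and $T(i)([t]) = T(i)([s])$, applying $\fv$ gives $\fv{[t]} = \fv{[s]} \subseteq X$; if $X$ is nonempty any retraction $r : Y \to X$ with $r \circ i = \id_X$ witnesses $[t] = T(r)T(i)([t]) = T(r)T(i)([s]) = [s]$. The edge case $X = \emptyset$ is handled by a grounding argument: either there are no closed terms and $T(\emptyset) = \emptyset$, or one fixes a closed term $c$ and applies the substitution $a \mapsto c$ throughout a derivation of $t \sim_Y s$ with $t,s$ closed to obtain a derivation in $\Tm(\emptyset)$.

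For the main claim, consider a pullback of $f : A \to D$ along a mono $m : B \hookrightarrow D$, which I may identify (WLOG) with the inclusion $P = f^{-1}(B) \hookrightarrow A$. Since $T$ preserves the inclusions $P \hookrightarrow A$ and $B \hookrightarrow D$, both $T(P)$ and the set-theoretic pullback of $T(f)$ and $T(m)$ embed canonically as subsets of $T(A)$. By the naturality of $\fv$, the latter pullback consists of those $[t] \in T(A)$ with $f(\fv{[t]}) = \fv{T(f)([t])} \subseteq B$, equivalently $\fv{[t]} \subseteq f^{-1}(B)$; this is precisely the image of $T(P)$. The final sentence is then immediate from Theorem~\ref{thm:semantic:condition} since every monad generated by an algebraic theory is finitary.

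The principal obstacle is the bookkeeping in the first step, ensuring that no-drop propagates uniformly through substitution instances and all congruence rules; but given how directly the hypothesis transfers through each rule this is essentially routine.
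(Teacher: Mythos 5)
Your proposal is correct and follows essentially the same route as the paper: the no-drop hypothesis makes the free-variable set an invariant of congruence classes, and this support function drives both mono preservation and the pullback characterisation, with the last sentence following from Theorem~\ref{thm:semantic:condition} and finitariness of free-model monads. The only real divergence is in proving that $T$ preserves injections, where you use a retraction (plus a grounding substitution for the empty domain) while the paper argues that derivations of $t \sim s$ restrict to the smaller variable set; both are valid.
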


\begin{proof}
 Let $f : X \to Y$ be a map and let $Z \subseteq Y$. We must show that the following is a pullback
 \[
\begin{tikzcd}
 T(\inv f(Z)) \ar[r, "T(f)"] \ar{d} & T(Z) \ar[d, "T(i)"] \\
 T(X) \ar[r, "T(f)"]  & T(Y)
\end{tikzcd}
 \]
where $i: Z \to Y$ is the inclusion. So suppose given $[s]_Z\in T(Z)$ and $[t]_X\in T(X)$ such that $T(i)[s]_Z = T(f)[t]_X$. Let $t_f$ be
the result of substituting the free variables of $t$ using $f$, so that  $T(f)[t]_X = [t_f]_Y$. The assumption says that 
$t_f$ and $s$ are related in the congruence relation generated by the equalities of the algebraic theory. Since none of these 
are drop equations, $t_f$ and $s$ have the same free variables. Since the free variables of $s$ are in $Z$, the free variables of
$t$ must be in $\inv f(Z)$. So $[t] \in T( \inv f(Z))$ and maps to $[s]_Z\in T(Z)$ and $[t]_X\in T(X)$. This proves existence. 

To prove uniqueness, it suffices to show that $T$ preserves monomorphisms,
implying that the map $T(\inv f(Z)) \to T(X)$ is mono. 
So suppose $A \subseteq B$, 
$j : A \to B$ is the inclusion, and that $t$ and $s$ 
are terms with free variables in $A$ such that $T(j)[s] = T(j)[t]$. 
Since there are no more equations between terms in $B$ than in 
$A$, also $[t] = [s]$.

Showing that $T$ preserves $\kappa$-small intersections is similar: Given a family $([t_i] \in T(X_i))_{i\in I}$,
with each $X_i \subseteq X$, the free variables in all $t_i$ must be the same, so that they all represent the
same equivalence class in $T(\cap_{i\in I} X_i)$.
\end{proof}

As a non-example, the pullback 
\[
\begin{tikzcd}
 \emptyset \ar{d} \ar{r} \ar[rd, phantom, "\lrcorner"] & \{0\} \ar{d} \\
 \{1\} \ar{r} & \{0,1\}
\end{tikzcd}
\]
witnesses that propositional truncation does not preserve pullbacks of monomorphisms. 

\section{Existential quantification}
\label{sec:existential}

The previous section handled the first requirement listed in Section~\ref{sec:requirements}. We now look at the last two, which both
concern commutativity of universal quantification over clocks and existential quantification. The property we need can be
stated in Clocked Type Theory as 
\begin{equation} \label{eq:comm:exist}
 \hastype{\Delta, X : \univ\Delta, \phi : X \to \forall\kappa . \prop{\Delta,\kappa}}{\exists (x : X). \forall\kappa. \capp{\phi(x)} = \forall\kappa. \exists (x:X). \capp{\phi(x)}}{\prop\Delta}
\end{equation}
where $\Delta = \kappa_1 : \clocktype, \dots, \kappa_n : \clocktype$ is a clock context of length $n$. We first note that this principle
does not hold in general.

\begin{example}
 Let $\Delta$ be the empty context and let $X = \indexord$ be the indexing ordinal. Let $\phi$ be the predicate
 on $\indexord\times \clk$ defined as
 \[
 \phi\quadruple{\FSA}\vartheta{\alpha}\lambda = \{\star \mid \vartheta(\lambda) \leq \alpha \}
 \]
 Since this predicate is invariant under clock introduction, it corresponds to a map $X \to \forall\kappa . \prop{\Delta,\kappa}$, but
 $\forall\kappa . \exists(\alpha : \indexord) . \capp{\phi(\alpha)} = \{\star\}$ 
 and 
\begin{align*}
 (\exists(\alpha : \indexord) . \forall\kappa . \capp{\phi(\alpha)})\timeobj{\FSA}\vartheta 
 & = \{\star \mid \exists \alpha : \indexord . \forall\beta . \phi\quadruple{(\FSA, \lambda_\FSA)}{\vartheta[\lambda_\FSA \mapsto \beta]}\alpha{\lambda_\FSA} \} \\
 & = \emptyset
\end{align*}
 because the condition fails when $\beta > \alpha$. 
 Such $\beta$ always exist, because $\indexord$ is assumed to be a limit ordinal, as it needs to be to interpret axiom (\ref{eq:force}). 
\end{example}

The above example is a variant of the well-known phenomenon that existential quantification in the internal language of a presheaf 
topos is local and does not imply global existence. This has been observed previously also in the setting of guarded
recursion~\cite{ToT,TransfiniteIris}. 

To understand the principle more generally, we look at substitutions from some object $\Gamma$ in $\grtotal$ into the context of (\ref{eq:comm:exist}). 
A mapping $\Gamma \to \clk^\Delta$ corresponds to an object of $\slice \grtotal{\clk^\Delta}$ which is equivalent to $\gr\Delta$, so
a mapping into the context of (\ref{eq:comm:exist}) corresponds to an object $\Gamma$ in $\gr\Delta$, a family $X$ over $\Gamma$, 
and finally, a predicate $\phi$ on $X \times \clk$. The equality (\ref{eq:comm:exist})
then translates to equality of 
$(\exists (x : X). \forall\kappa. \phi(x)) \triple{(\FSA,\vartheta)}{\chi}\gamma$ which is
\begin{align} \label{eq:exists:forall}
 \{ \star \mid \exists x : X\triple{(\FSA, \vartheta)}{\chi}\gamma. \forall 
 \alpha < \indexord . \phi\sixtuple{(\FSA, \lambda_\FSA)}{\vartheta[\lambda_\FSA\mapsto \alpha]}{\iota\chi}{\iota\cdot\gamma}{\iota\cdot x}{\lambda_\FSA} \}
\end{align}
and $(\forall\kappa.\exists (x : X). \phi(x)) \quadruple{\FSA}\vartheta{\chi}\gamma$ which is 
\begin{align} \label{eq:forall:exists}
 \{ \star \mid \forall 
 \alpha < \indexord . \exists x : X\quadruple{(\FSA, \lambda_\FSA)}{\vartheta[\lambda_\FSA\mapsto \alpha]}{\iota\chi}{\iota\cdot\gamma}. \phi\sixtuple{(\FSA, \lambda_\FSA)}{\vartheta[\lambda_\FSA\mapsto \alpha]}{\iota\chi}{\iota\cdot\gamma}{x}{\lambda_\FSA} \}
\end{align} 
for all $\gamma \in \Gamma\triple{\FSA}\vartheta{\chi}$ and $\chi : \Delta \to \FSA$. 

\begin{theorem} \label{thm:comm:exists:small}
 If $\indexord$ is a regular ordinal and all fibres $X\quadruple{\FSA}\vartheta{\chi}\gamma$ are of cardinality strictly smaller than $\indexord$, then 
 (\ref{eq:exists:forall}) and (\ref{eq:forall:exists}) are equal.
\end{theorem}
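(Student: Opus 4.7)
The plan is to observe that the non-trivial content is the implication from~(\ref{eq:forall:exists}) to~(\ref{eq:exists:forall}); the converse is immediate since any $x\in X\quadruple\FSA\vartheta\chi\gamma$ witnessing the left-hand side gives $\iota\cdot x$ as a witness of the inner existential at each time $\alpha$. The strategy for the hard direction is to rephrase it as non-emptiness of an intersection of a decreasing chain of subsets of a small set, and then close the argument by a cardinality-versus-cofinality computation.

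First I would use that $X$ is a family, and hence invariant under clock introduction, so that $\iota\cdot(-):X\quadruple\FSA\vartheta\chi\gamma \to X\quadruple{(\FSA,\lambda_\FSA)}{\vartheta[\lambda_\FSA\mapsto\alpha]}{\iota\chi}{\iota\cdot\gamma}$ is a bijection for every $\alpha<\indexord$. Using this to rewrite the inner existential in~(\ref{eq:forall:exists}) as an existential over $X\quadruple\FSA\vartheta\chi\gamma$ with $x$ replaced by $\iota\cdot x$, and defining
\[
 S_\alpha \defeq \{\, x \in X\quadruple\FSA\vartheta\chi\gamma \mid \phi\sixtuple{(\FSA,\lambda_\FSA)}{\vartheta[\lambda_\FSA\mapsto\alpha]}{\iota\chi}{\iota\cdot\gamma}{\iota\cdot x}{\lambda_\FSA} \neq \emptyset \,\},
\]
the two sides become the assertions $\forall\alpha<\indexord.\,S_\alpha\neq\emptyset$ and $\bigcap_{\alpha<\indexord} S_\alpha\neq\emptyset$ respectively. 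The next step is to verify that the $S_\alpha$ form a decreasing chain: for $\alpha\leq\beta$ the identity on $(\FSA,\lambda_\FSA)$ is a $\catT$-morphism from the time-$\beta$ stage to the time-$\alpha$ stage (since $\vartheta[\lambda_\FSA\mapsto\alpha]\leq\vartheta[\lambda_\FSA\mapsto\beta]$), and the covariant action of $\phi$ along this morphism preserves inhabitation, so $S_\beta\subseteq S_\alpha$.

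The hard part will be the final cardinality argument. For each $x\in X\quadruple\FSA\vartheta\chi\gamma\setminus\bigcap_{\alpha}S_\alpha$, let $\alpha_x<\indexord$ be the least ordinal with $x\notin S_{\alpha_x}$; the set of such $\alpha_x$ is a subset of $\indexord$ of cardinality at most $|X\quadruple\FSA\vartheta\chi\gamma|<\indexord$, so by regularity of $\indexord$ it is bounded by some $\alpha^*<\indexord$. Since any $x$ with $\alpha_x<\alpha^*$ satisfies $x\notin S_{\alpha^*}$ by monotonicity, one obtains $S_{\alpha^*}=\bigcap_{\alpha<\indexord} S_\alpha$, which is non-empty because $S_{\alpha^*}$ is by hypothesis. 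The real subtlety is recognising that the presheaf monotonicity of $\phi$ is precisely what lets the smallness of the fibres of $X$ control the length of the chain, and that this is where regularity (rather than a weaker condition) of $\indexord$ enters.
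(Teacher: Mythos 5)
Your proof is correct and follows essentially the same route as the paper's: rewrite the inner existential over the small fibre using invariance under clock introduction, establish downward closedness in $\alpha$ via the identity map on $(\FSA,\lambda_\FSA)$ viewed as a $\catT$-morphism, and use regularity of $\indexord$ to bound the ordinals $\alpha_x$ by a supremum below $\indexord$. The paper phrases this contrapositively (assuming (\ref{eq:exists:forall}) fails and refuting (\ref{eq:forall:exists})) whereas you argue directly via stabilisation of the chain $S_\alpha$, but the content is identical.
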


\begin{proof}
 Since $X$ is invariant under clock introduction, (\ref{eq:forall:exists}) can be rewritten as
 \begin{equation}\label{eq:forall:exists:2}
  \{ \star \mid \forall 
 \alpha < \indexord . \exists x : X\quadruple{\FSA}{\vartheta}{\chi}{\gamma}. \phi\sixtuple{(\FSA, \lambda_\FSA)}{\vartheta[\lambda_\FSA\mapsto \alpha]}{\iota\chi}{\iota\cdot\gamma}{\iota\cdot x}{\lambda_\FSA} \}
\end{equation}
Clearly the condition of (\ref{eq:exists:forall}) implies that of (\ref{eq:forall:exists:2}). 
For the other direction, first note that the condition is downward closed in $\alpha$ because, for $\beta\leq \alpha$, the identity on 
$(\FSA, \lambda_\FSA)$ tracks a map $\sigma$ 
which gives 
\[
 \sigma \cdot(-) : \phi\sixtuple{(\FSA, \lambda_\FSA)}{\vartheta[\lambda_\FSA\mapsto \alpha]}{\iota\chi}{\iota\cdot\gamma}{\iota\cdot x}{\lambda_\FSA} \to \phi\sixtuple{(\FSA, \lambda_\FSA)}{\vartheta[\lambda_\FSA\mapsto \beta]}{\iota\chi}{\iota\cdot\gamma}{\iota\cdot x}{\lambda_\FSA}
\]
Now, arguing classically, suppose (\ref{eq:exists:forall}) is false. 
For each $x$ there must exist  an $\alpha_x$, such that
\[
  \phi\sixtuple{(\FSA, \lambda_\FSA)}{\vartheta[\lambda_\FSA\mapsto \beta]}{\iota\chi}{\iota\cdot\gamma}{\iota\cdot x}{\lambda_\FSA} 
\]
is false for all $\beta\geq\alpha_x$. By the condition of the cardinality of the fibres of $X$ and since $\indexord$ is regular, the least upper bound
$\alpha$ of all the $\alpha_x$ is an ordinal which is smaller than $\indexord$. This ordinal then proves (\ref{eq:forall:exists}) false.
\end{proof}

The above proof is essentially the same as the proof of the `existential property'  forming part of the basis 
of Transfinite Iris~\cite[Theorem~5.2]{TransfiniteIris}, but adapted to the multi-clocked setting. 
The existential property states that any existential proved in
the logic of Transfinite Iris externalises to an existential in the meta-logic. Transfinite Iris uses a large indexing ordinal
to ensure that the indexing ordinal is larger than the existentially quantified type. We will discuss this point further 
in the conclusion. 

Universal quantification over clocks also commutes with essentially unique existential quantifications.

\begin{theorem} \label{thm:comm:exists:unique}
 Let $\Gamma$ be an object of $\gr\Delta$, 
 $X$ a family over $\Gamma$, and $\phi$ a predicate over $X \times \clk$, and let $n : \Gamma \to \N$
 be a map of presheaves. If $\phi(x, \kappa) \wedge \phi(y,\kappa)$ implies $(\later^\kappa)^n(x=y)$ then (\ref{eq:exists:forall}) and (\ref{eq:forall:exists}) are equal.
\end{theorem}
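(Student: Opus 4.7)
My plan is to adapt the pointwise strategy from the proof of Theorem~\ref{thm:comm:exists:small}. Fix a stage $\triple\FSA\vartheta\chi$ and $\gamma\in\Gamma\triple\FSA\vartheta\chi$, and set $m\defeq n(\gamma)$. Using invariance of $X$ under clock introduction exactly as in the proof of Theorem~\ref{thm:comm:exists:small}, I replace $X\triple{(\FSA,\lambda_\FSA)}{\vartheta[\lambda_\FSA\mapsto\alpha]}{\iota\cdot\gamma}$ by $X\triple\FSA\vartheta\gamma$ throughout, and introduce the subsets
\[
S_\alpha\defeq\{x\in X\triple\FSA\vartheta\gamma\mid\phi\text{ holds of }\iota\cdot x\text{ at time }\vartheta[\lambda_\FSA\mapsto\alpha]\}.
\]
With this notation (\ref{eq:forall:exists}) becomes the statement $S_\alpha\neq\emptyset$ for every $\alpha<\indexord$, and (\ref{eq:exists:forall}) becomes $\bigcap_{\alpha<\indexord}S_\alpha\neq\emptyset$. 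One direction is immediate; I focus on the other.

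The first observation is that the family $(S_\alpha)_{\alpha<\indexord}$ is monotone decreasing in $\alpha$: for $\beta\leq\alpha$ the identity on $\FSA\cup\{\lambda_\FSA\}$ tracks a $\catT$-morphism from $\vartheta[\lambda_\FSA\mapsto\alpha]$ to $\vartheta[\lambda_\FSA\mapsto\beta]$, and since $\phi$ is a predicate its truth is preserved along this restriction, giving $S_\alpha\subseteq S_\beta$.

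The heart of the argument is using the uniqueness hypothesis to prove $|S_\alpha|\leq 1$ whenever $\alpha\geq m$. I instantiate the hypothesis at $\kappa=\lambda_\FSA$ and evaluate at time $\vartheta[\lambda_\FSA\mapsto\alpha]$: for any $x,y\in S_\alpha$ the predicate $(\later^{\lambda_\FSA})^m(x=y)$ holds at that stage. Unfolding $\later^{\lambda_\FSA}$ $m$ times, this forces $\iota\cdot x=\iota\cdot y$ at every stage obtainable from $\vartheta[\lambda_\FSA\mapsto\alpha]$ by $m$ successive strict decreases of the $\lambda_\FSA$-component; since $\alpha\geq m$ at least one such chain exists (for instance $\alpha>m-1>m-2>\cdots>0$), and invariance of $X$ under clock introduction propagates the equality at that sub-stage back to $x=y$ in $X\triple\FSA\vartheta\gamma$.

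Combining the two observations, $S_m\supseteq S_{m+1}\supseteq\cdots$ is a descending chain of non-empty singletons that must all coincide with a single $\{x_\star\}$; and since $S_m\subseteq S_\alpha$ for $\alpha\leq m$ by monotonicity, the element $x_\star$ lies in $S_\alpha$ for every $\alpha<\indexord$, giving the theorem. The main subtlety I anticipate is the unfolding of the iterated $\later$ into the chain condition on stages and the correct use of invariance to lift the resulting equality at a sub-stage back to the ambient fibre $X\triple\FSA\vartheta\gamma$; once this bookkeeping is done, the rest is a direct uniqueness-plus-monotonicity argument that requires no cardinality assumption on $\indexord$ or on the fibres of $X$.
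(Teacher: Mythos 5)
Your proof is correct and takes essentially the same route as the paper's: the paper likewise combines downward closure of $\phi$ in the time index with the observation that $(\later^\kappa)^{n}(x=y)$ forces $x=y$ once the time assigned to $\kappa$ is at least $n(\gamma)$, so that the (unique) witness at one sufficiently large stage is the witness at every stage. The differences are purely presentational — your sets $S_\alpha$ versus the paper's direct manipulation of witnesses, and your explicit unfolding of the iterated $\later$ versus the paper's one-line characterisation of when $(\later^\kappa)^{n}(x=y)$ holds.
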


\begin{proof}
 The predicate 
 \[
   \genericjudg{\Delta,\Gamma, x : X, y: X, \kappa : \clocktype}{(\later^\kappa)^n(x=y)}
 \] 
 is true at $\seventuple\FSA\vartheta\chi \gamma xy\lambda$ if and only if either $x=y$ or $\vartheta(\lambda)<n(\gamma)$. 
 So, for $\alpha$ larger than $n(\gamma)$, there exists at most a single $x$ such that
\[
  \phi\sixtuple{(\FSA, \lambda_\FSA)}{\vartheta[\lambda_\FSA\mapsto \alpha]}{\iota\chi}{\iota\cdot\gamma}{\iota\cdot x}{\lambda_\FSA} 
\]
Suppose now that (\ref{eq:forall:exists:2}) holds, and pick $\alpha$ larger than $n(\gamma)$. By (\ref{eq:forall:exists:2}) there is an $x$ such that
the above holds for that $\alpha$ and $x$. By downward closedness, it then also holds at all $\beta\leq \alpha$. For $\beta\geq\alpha$, there is
a $y$ such that the above holds at $\beta$ for that $y$. By downwards closure and uniqueness, we then conclude $x=y$, so that $x$ witnesses 
(\ref{eq:exists:forall}).
%
\end{proof}


%

\section{Conclusion}
\label{sec:conclusion}

We have seen that by choosing the indexing ordinal of the model to be sufficiently big, all the requirements of the applications 
studied in Section~\ref{sec:applications} can be met. For example, to program with finitely branching non-deterministic processes, and reasoning about
bisimilarity of these using guarded recursion, as done by Kristensen et al~\cite{CubicalCloTT} the indexing ordinal can be $\omega_1$
(the first uncountable ordinal) or higher. For modelling Probabilistic FPC and reasoning about contextual equivalence of programs,
the indexing ordinal needs to be larger than the cardinality of $\Dfin(\Dforall 1)$, which is uncountable. Finally, for reasoning about the coinductive delay
monad up to weak bisimilarity as a monad on the category of setoids, one needs the indexing monad to be $\omega_1$ or higher, such that
quantification over clocks commutes with existential quantification over $\N$. In all these examples, we can now interpret the results proved in
previous papers into the model and conclude the expected set theoretic results. For example, Theorem~7.7 of~\cite{FewExtraPages} now proves the existence
of a distributive law of the monad generated by an algebraic theory with no drop equations over the coinductive delay monad up to weak bisimilarity
as monads on the usual category of setoids. Likewise the results about contextual equivalence of Probabilistic FPC programs proved in~\cite{POPL25} 
now hold for the operational semantics given in that paper as interpreted in the usual set-theoretic way, where $\Dforall$ 
is a coinductively defined monad.

The syntactic requirements on algebraic theories to generate monads commuting with clock quantification are quite natural from a 
type theoretic point of view: The arities should be of a particular form and the equations non-drop. The latter corresponds to the equations being 
typeable in a relevant type system (in the technical sense of substructural types). 
The requirements on cardinality for existential quantifications, on the other hand, are less 
natural in type theory. One way to avoid this is to use the ordinal
of all ordinals as indexing, as done in Transfinite Iris~\cite{TransfiniteIris}. 
Since this ordinal is large (lives in the next universe), it suffices to commute 
clock quantification with all small existential quantifications. 
This simplification comes at a price, however: Universal quantification over clocks now raises the universe 
level for small types. Monads such as $\Dforall$ defined using universe quantification will therefore only be monads on the category 
of sets in the next universe. In particular, the type $\Dforall 1$ will not be small, 
and so existential quantification over it will not
automatically commute with universal quantification over clocks, 
unless one can prove $\Dforall 1$ isomorphic to a small set. 


An alternative to proving type constructors commute with clock quantification is to use Worrels proof~\cite{Worrell05} that for 
any $\kappa$-accessible endofunctor $F$ on $\Set$, the terminal sequence converges in $\kappa + \kappa$ steps, producing 
a terminal coalgebra for $F$. This proves that the encoding of coinductive types for those functors is correct, externally. While this
result is more general (in the sense that it works for all cardinals $\kappa$), it is also less natural and flexible from the type theoretic
point of view. From the computational point of view, Kristensen et al.~\cite{CubicalCloTT} develop a notion of induction under clocks
that provides computational content to axioms of higher-inductive types or quotient inductive types commuting with clock quantification.
It is not clear how to provide computational content to axioms of being a terminal coalgebra. 

Multimodal Type Theory~\cite{gratzer2020multimodal} gives a different approach to reasoning about the relationship between $\Set$ 
and the topos of trees in one type theory. Each of these categories correspond to a mode of the type theory, and the functors of
constant presheaves and global elements are internalised as modalities between these two modes. While the use of modes
makes the relationship between these categories explicit in the language, as we saw in Section~\ref{sec:model} the same two
models form submodels of the model of multiclocked guarded recursion. The results of Sections~\ref{sec:alg:types} and~\ref{sec:existential}
can also be applied in the setting of Multimodal Type Theory. 


%
%
%
%
%
%
%
%

%



\bibliography{paper}

\end{document}